\numberwithin{equation}{section}
\theoremstyle{plain}
\newtheorem{theorem}{Theorem}
\newtheorem{lemma}[theorem]{Lemma}
\newtheorem{corollary}[theorem]{Corollary}
\theoremstyle{definition}
\newtheorem{definition}[theorem]{Definition}
\newtheorem{proposition}[theorem]{Proposition}
\theoremstyle{remark}
\newtheorem{remark}[theorem]{Remark}
\providecommand{\red}[1]{\textcolor{red}{#1}}
\definecolor{darkred}{rgb}{0.5,0,0}
\definecolor{darkgreen}{rgb}{0,0.5,0}
\definecolor{darkblue}{rgb}{0,0,0.5}
\newcommand{\F}{\mathbb F}
\newcommand{\Z}{\mathbb Z}
\newcommand{\cF}{\mathcal F}
\DeclarePairedDelimiter\setsize\lvert\rvert
\DeclarePairedDelimiter\abs\lvert\rvert
\DeclarePairedDelimiterX\Icc[2][]{#1,#2}
\DeclarePairedDelimiterX\Ioc[2](]{#1,#2}
\DeclarePairedDelimiterX\Ico[2][){#1,#2}
\DeclarePairedDelimiterX\Ioo[2](){#1,#2}
\newcommand\pfn{p_{\mathrm{fn}}}
\newcommand\pfp{p_{\mathrm{fp}}}
\newcommand\Tfn{T_{\mathrm{fn}}}
\newcommand\Tfp{T_{\mathrm{fp}}}
\DeclareMathOperator\Prob{\mathbb P}
\DeclareMathOperator\E{\mathbb E}
\DeclareMathOperator\Var{Var}
\newcommand\given\mid
\newcommand\ifu[1]{\mathbf 1\ifthenelse{\equal{#1}_}{_}{_{#1}}}
\newcommand\kronecker[2]{\delta\ifthenelse{\equal{#1}_}{_{#2}}{_{#1,#2}}}
\newcommand\Union\bigcup
\newcommand\union\cup
\newcommand\Isect\bigcap
\newcommand\isect\cap
\newcommand\textq[1]{\text{#1}\quad}
\newcommand\qtextq{\quad\textq}
\newcommand\Id[1]{\mathrm{Id}\ifthenelse{\equal{#1}_}{_}{_{#1}}}
\DeclarePairedDelimiterX\spr[2]\langle\rangle{#1,#2}
\newcommand\ie{i.\,e.}
\newcommand\eg{e.\,g.}
\newcommand\cf{cf.}
\newcommand\helper{\xi}
\newcommand\COMP{\textrm{COMP}}
\newcommand\NCOMP{\textrm{NCOMP}}
\newcommand\nc{\delta}
\newcommand\sens{\mathrm{sens}}
\newcommand\spec{\mathrm{spec}}
\newcommand\typeI{\Prob( X_j = 0 \mid Z_j = 1 )}
\newcommand\typeII{\Prob( X_j = 1 \mid Z_j = 0 )}
\newcommand\rhodisj{\rho_{\textrm{disj}}}
\newcommand\rhoinfo{\rho_{\textrm{info}}}
\title[Statistics of noisy, one-stage group testing]
  {The statistics of noisy one-stage group testing in outbreaks}
\subjclass[2010]{05B05, 62J15, 92D30, 94A12}
\keywords{Group testing, linear regime, sensitivity, specificity, noise channel, pooled testing, non-adaptive testing, PCR}
\author[C.~Schumacher]{Christoph Schumacher}
\author[M.~T\"aufer]{Matthias T\"aufer}
\address{Christoph Schumacher, Fakult\"at für Mathematik, Technische Universit\"at Dortmund, 44227 Dortmund, Germany}
\email{christoph.schumacher@math.tu-dortmund.de}
\address{Matthias T\"aufer, Lehrgebiet Analysis, Fakult\"at Mathematik und Informatik, Fernuniversit\"at in Hagen, 58084 Hagen, Germany}
\email{matthias.taeufer@fernuni-hagen.de}
\date{\today}
\def\colorlist{{"red","blue!60","green","orange!70","yellow","pink","cyan","violet!70","gray"}}%
\def\twodecimals#1{%
  \pgfmathprintnumber{#1}%
}
\begin{document}
\maketitle

\begin{abstract}
  In one-stage or non-adaptive group testing,
  instead of testing every sample unit individually,
  they are split, bundled in pools, and simultaneously tested.
  The results are then decoded to infer the states of the individual items.
  This combines advantages of adaptive pooled testing,
  \ie\ saving resources and higher throughput,
  with those of individual testing,
  \eg\ short detection time and lean laboratory organisation,
  and might be suitable for screening during outbreaks.
  \par
  We study the COMP and NCOMP decoding algorithms
  for non-adaptive pooling strategies based on
  maximally disjunct pooling matrices with constant row and column sums
  in the linear prevalence regime and
  in the presence of noisy measurements motivated by PCR tests.
  We calculate sensitivity, specificity,
  the probabilities of Type I and II errors,
  and the expected number of items with a positive result
  as well as the expected number of false positives and false negatives.
  We further provide estimates on the variance
  of the number of positive and false positive results.
  \par
  We conduct a thorough discussion of the calculations and bounds derived.
  Altogether, the article provides blueprints for screening strategies
  and tools to help decision makers to appropriately tune them in an outbreak.
\end{abstract}

\section{Introduction}

Group testing addresses the problem of detecting a rare feature
in a large population.
By pooling sample units,
one can often clear large subsets of the sample with a single test.
After this first stage,
classical group testing proceeds to retest items in positive pools.
Thus, one splits the sample units into several pieces beforehand,
pools only part of each item and keeps the rest for retesting
in subsequent stages.

While this can saves resources,
these adaptive testing strategies are time-consuming and laborious
which is one reason preventing their wide-spread implementation
\cite{RKI-20}.
A way to mitigate these drawbacks is \emph{one-stage}
or \emph{non-adaptive} group testing:
In order to avoid a second stage,
one includes (parts of) each sample unit in several pools
and exonerates every sample unit which appears in a pool that tests negative.
This decoding strategy is called
\emph{Combinatorial Orthogonal Matching Pursuit} (COMP).

The tradeoff is a non-zero probability for false positive test results
which occur if a negative sample unit is ``shadowed'' by positive items,
that is, if each pool that contains the falsely positive item is contaminated
by actually positive items.
To minimize shadowing, we commit to particular pool designs referred to as
\emph{multipools} in~\cite{Taeufer-20}, 
which are based on \emph{maximally disjunct} pooling matrices.
This means that each pair of sample units meets in at most one pool.

Still, the probability for falsely positive results has to be controlled
for a reliable interpretation of the test results.
To this end, we assume that every sample unit is independently infected
with probability~$\rho$.
This scenario is often referred to the \emph{linear regime}
and is a natural assumption in population screening.
We also account for measurement errors with a noise model
inspired by biomedical testing which has been argued for in \cite{BaronRZ-20}
and is given in formula~\eqref{eq:noise-model}.
This can introduce false negative results
which we counter by the error-correcting noisy COMP (NCOMP) algorithm.

In this setting, we provide formulas for sensitivity, specificity,
the probabilities of Type~I and II errors, and the expected number
of positive, false positive and false negative results.
We also provide bounds on the variance of the number of
positive and false positive results in the noiseless case.

\subsection{Motivation: screening via PCR}

Real time reverse transcription polymerase chain reaction
(RT-PCR or briefly PCR)
is a biochemical procedure to identify certain DNA or RNA sequences
and an important tool to detect infectious diseases.
Its large scale use can be constrained by factors such as
the availability of collection devices (swab kits),
trained staff to take samples, and their protective equipment,
the availability of reagents, the number of PCR machines,
lab staff, and logistics.

In epidemiological scenarios,
there are different regimes of PCR application to distinguish.
\emph{Diagnostic testing} happens in the clinical context with the goal
to precisely measure the viral or bacterial load in a patient
and inform clinical treatment.
One wants to \emph{maximize accuracy} and \emph{minimize detection time}
whereas an efficient use of resources or costs are of secondary concern.
In contrast to that, \emph{screening} takes place in a public health context,
and the goal is to
\emph{maximize the overall epidemiological or public health benefit}
with given resources.
This typically means that one wants to prevent as many transmission events
as possible --- usually by identifying and isolating infectious carriers
who might be pre-symptomatic or asymptomatic.

A screening strategy increasing the overall number of people tested
could therefore be justified --- even if it leads to a reduced accuracy
of single tests since this could be compensated by frequent retesting.
This might be achieved by a range of measures such as self-swabbing,
testing saliva instead of nasopharyngeal swabs~\cite{Vogels2020},
running the PCR for fewer cycles
(testing for infectiousness instead of infection),
optimization of the use of critical reagents in the lab,
and also by adaptive and non-adaptive pooling, which we focus on here.

In the COVID-19 pandemic, large-scale screening has been suggested
as an effective measure~\cite{ECDC-20}
and pooled testing has been suggested as an approach to deal
with scarce resources~\cite{Mallapaty2020,Taipale2020}.
There has been some emphasis on \emph{two-stage, adaptive testing},
where, after a first round of pooled tests,
individuals in positive pools are assessed again~\cite{Liu2020,Chen2020}.

One-stage strategies, where results become available
after only one round of testing, have been suggested in
\cite{Shental2020,Taeufer-20,Ghosh2020,BaronRZ-20,PetersenBJ-20}.
It has been shown that detection of the SARS-CoV-2 virus
in pools of size 100 is possible,
which promises massive improvements in throughput~\cite{Mutesa2020}. 
However, the implementation of pooling strategies will require
a thorough understanding of the consequences such as possible tradeoffs
in accuracy involved.
In this paper, we aim to contribute towards such an understanding
by investigating statistical measures
associated with one-stage pooling strategies.

\subsection{Non-adaptive group testing}

We focus on \emph{non-adaptive} or \emph{one-stage group testing}
where every person's sample is put into a number of pools according
to a design matrix, all pools are tested in parallel,
and the results are then decoded.
We consider this as preferable to two- or multi-stage strategies
since only one round of the PCR is required which offers shorter detection times
and possibly a leaner organization of laboratory processes~\cite{Taeufer-20}.
In the context of the COVID-19 pandemic this is particularly important
because the viral load of the SARS-CoV-2 virus and the infectiousness
have been observed to be high in patients before and around symptom onset
\cite{He2020,To2020,Adam2020,Kupferschmidt2020}.
Thus, every hour between the sample taken and the result returned matters.
We consider only binary PCR,
where results are ``positive'' or ``not positive''.
There exists approaches on pool testing for COVID-19 where, using compressed sensing, also quantitative results of the PCR are used in the reconstruction~\cite{Ghosh2020,PetersenBJ-20}. 
If a patient is identified in a screening process,
this should probably be followed up by an individual test for clinical purposes,
but this would belong to the realm of diagnostics
which is not the topic of this note.

Our pooling strategies will be based on design matrices
with constant row and column sum and which have maximal disjunctness.
Such designs have also been studied in combinatorics,
where they are known as Steiner systems \cite{ColbournD-07}
and have been called \emph{multipools} in \cite{Taeufer-20}.
Constant row and column designs have been seen to be practical
\cite{ErlichEtAl2015}, and disjunctness is directly related
to the maximal number of infected items for which perfect reconstruction
is mathematically guaranteed in a noiseless scenario, see \cite{AldridgeJS-19}.

We construct examples of such design matrices using
linear Reed-Solomon Codes~\cite{ReedS-60, KautzS-64}
and the Shifted Transversal Design~\cite{Thierry-Mieg2006},
two constructions based on the same underlying algebraic principle.
We will consider noisy measurements where the noise model
\begin{equation}\label{eq:noise-model}
  \Prob\left(
    \text{Pool tests negative}
    \given
    \text{Pool contains~$k$ positive items}
  \right)
  =(1-\pfp)\pfn^k
\end{equation}
depends on the number of true infected items in a pool
and contains two parameters~$\pfp$ and~$\pfn$,
modulating false positive and false negative probabilities
of a single measurement, respectively.
This noise model is for instance argued for in~\cite{BaronRZ-20}.
We will investigate the ``simple'' COMP decoding algorithm,
cf.~\cite{AldridgeJS-19} for an overview, and its error-correcting brother,
the NCOMP decoding algorithm~\cite{Chan2011,ChunLamChan2014}.
Both are trivial to implement
with minimal run-time and storage.

\subsection{Statistical measures of non-adaptive group testing}

In order to evaluate a testing strategy, several quantities can be considered.
They may depend on~$\pfn$, $\pfp$, the prevalence $\rho\in\Ioo01$,
and parameters such as the pool size~$q$,
the number~$m$ of pools each item participates in,
and a parameter~$\nc$ tuning the NCOMP algorithm.

The first quantity is the \emph{compression ratio},
that is the inverse of the average number of tests required per item.
It describes the savings compared to individual testing.
Note that in two-stage strategies, not only the average number of tests,
but also the variance or standard deviation of the number of tests used
for a given population size are relevant.
This is because the number of tests needed in the second round is unknown
and this process of re-assessing pools can create logistical challenges.
In one-stage strategies,
the number of tests per item is a fixed number and has zero variance.
We consider this another advantage of non-adaptive versus adaptive testing.

The savings in tests are to be compared to possible sacrifices in accuracy.
In the literature one finds investigations of:
\begin{itemize}
\item The maximal number of infected items which are guaranteed
  to be correctly identified if there is no noise, \cite{Mazumdar2012},
\item the minimal number of tests required to achieve asymptotically
  a full reconstruction in the sublinear prevalence regime, \cite{AldridgeJS-19}.
\end{itemize}
Both metrics might not be ideal for the application to screening ---
on the one hand because they do not take into account noise,
on the other hand because they are either tailored towards worst-case scenarios
or work in asymptotic limits and in the so-called sublinear regime
where the portion of infected persons is assumed to tend to zero
with growing population size.

Instead, one would rather like to study the \emph{average performance}
or average number of false positive results~\cite{Mazumdar2012}.
In the literature, one finds investigations where for a
\emph{random draw of a fixed number of infected items}
quantities such as
\begin{itemize}
\item the \emph{number of positive results}~$T$,
\item the \emph{number of false positive results} $\Tfp$,
\item and the \emph{number of false negative results} $\Tfn$
\end{itemize}
are simulated~\cite{Shental2020}.
Such a fixed number of infected patients in a pool is a simplifications
which ignores the true probabilistic structure of the infections.
We believe that an approach better suited to inform decision making
is to investigate for a given prevalence~$\rho$.
\begin{itemize}
\item the \emph{sensitivity}, that is the probability that an infected item
  is actually picked up by the testing strategy:
  \begin{equation}\label{eq:sensitivity}
    \sens
    =\Prob(\text{test result positive}\given\text{patient infected})\text,
  \end{equation}
\item the \emph{specificity}, that is the probability that a non-infected item
  is correctly identified as negative:
  \begin{equation}
    \label{eq:specificity}
    \spec
    =\Prob(\text{test result negative}\given\text{patient not infected})\text,
  \end{equation}
\end{itemize}
In addition to that, there are two more quantities which matter
from a public health perspective,
since they tell an individual how reliable their result is.
Indeed, there exist situations where a testing strategy has both
high sensitivity and high specificity but still most positive results
will be false positives, see Table \ref{table:sensitivity_and_specificity}
for a synthetic example.
This phenomenon is also known as \emph{screening paradox}
and can be disadvantageous since patients might be reluctant
to comply with public health measures based on these probabilities.

\begin{table}[h!]
  \begin{tabular}{c|c|c}
    $n = 1000$ & patient infected & patient not infected \\
  \hline
    result positive & $19$ & $20$ \\
  \hline
    result negative & $1$ & $960$ \\
  \end{tabular}
\caption{This synthetic example of a test run on $n=1000$ individuals
  has both reasonably high observed empiric sensitivity and specificity
  ($19/20 = 0.95$ and $960/980 \approx 0.98$),
  but more than half ($20/39 \approx 0.51$)
  of positive results are actually false positives.}
  \label{table:sensitivity_and_specificity}
\end{table}

Therefore, we also quantify:
\begin{itemize}
\item the Type I error, that is the probability that a positive test results
  turns out to be a false positive:
  \begin{equation*}
    \Prob(\text{patient not infected}\given\text{test positive})\text,
  \end{equation*}
\item the Type II error, that is the probability that a negative test result
  is a false negative:
  \begin{equation*}
    \Prob(\text{patient infected}\given\text{test negative})\text.
  \end{equation*}
\end{itemize}

We also calculate the expected number of positive results,
false positive results, and false negative results.

Finally, we also estimate the variance of the number of positive
and false positive tests in the noiseless case.
We use the Efron--Stein inequality for this bound.
Such an estimate is important because the number of false positive results
seems to be a heavy tailed random variable:
Most of the time, most items will be correctly identified, but in some rare cases,
when the random number of infected items in the pools exceeds a certain threshold,
a phase transition occurs and an overwhelming number of items in the test
will be erroneously flagged as positive.
Some authors suggest to treat this phenomenon as a
``graceful failure''~\cite{Ghosh2020}
which might still flag a local outbreak
without specifying the infected individuals.
However, in order to correctly flag this phenomenon,
more knowledge on higher moments
of the number of positive and false positive results is useful.
Hence we provide this bound on the variance.

The rest of the article is structured as follows:
In Section~\ref{sec:notation}, the main definitions and notations are introduced.
Section~\ref{sec:results} contains the main results.
After that, Section~\ref{sec:calculations}
contains the calculations of sensitivity,
specificity and the probability of Type I and Type II errors.
In Section~\ref{sec:Efron-Stein}, the bounds of the variance are proved,
and Section~\ref{sec:matrices} provides details on the construction
of some non-adaptive pooling matrices of the form we consider.

\section{Notation and results}\label{sec:notation_and_results}

\subsection{Notation}\label{sec:notation}

We use notation inspired by the group testing literature:
There are~$n$ \emph{items} (\eg\ nasophrygnal swabs)
which can be \emph{infected} or \emph{non-infected}.
The \emph{state of the items} is a vector:
\begin{equation*}
  X=(X_j)_{j\in\{1,\dotsc,n\}} \in\{0,1\}^n\text,
\end{equation*}
where $X_j=1$ if item~$j$ is infected and $X_j=0$ otherwise.

We assume that the~$X_j$ are drawn independently from a Bernoulli distribution
with \emph{infection probability} or \emph{prevalence} $\rho\in\Ioo01$,
where for practical purposes,~$\rho$ is assumed small.
This is also called the \emph{linear prevalence regime}
in group testing.

The items are pooled into pools of size~$q$
such that every item participates in exactly~$m$
pools and such that no pair of items appears in more than one pool.
In particular, \emph{the overall number of tests} is $t=mq$
and the \emph{compression ratio},
the factor of improvement with respect to individual testing, is $n/t$.

Formally, the pooling can be described by the \emph{pooling matrix}
\begin{equation*}
  A\in\{0,1\}^{t\times n}
\end{equation*}
which encodes which item is put into which pool.
We write $A_{i,j}=1$ if and only if item~$j$ enters into pool~$i$.
In particular, $(AX)_i$ is the number of positive items in pool~$i$.
In terms of the pooling matrix~$A$,
the above conditions mean that~$A$ is a \emph{multipool matrix}
in the sense of the following definition.
\begin{definition}\label{def:multipool}
  We call the matrix $A\in\{0,1\}^{t\times n}$
  an \emph{$(n,q,m)$-multipool matrix} \cite{Taeufer-20}
  or a \emph{$(m-1)$-disjunct matrix with constant row and column sums},
  if the following three conditions hold:
  \begin{enumerate}[(M1)]
  \item The sum over every row is~$q$.
  \item The sum over every column is~$m$.
  \item The scalar product of any two columns is at most one.
  \end{enumerate}
\end{definition}
In the language of \emph{Block designs},
multipools are known as \emph{uniform $1$-designs} or \emph{Steiner systems}
and a maximal multipool with $n=q^2$ and $m=q+1$ is a $2$-design,
\cf\ \cite{ColbournD-07,Stinson-04}.

Definition~\ref{def:multipool} imposes constraints on the interplay of~$n$,
$q$, and~$m$.
$(n,q,m)$-multipool matrices exist for instance if~$q$ is a \emph{prime number}
or \emph{a power of a prime}, the overall number of items is $n=q^2$,
and~$m$ is not larger than $q+1$.
We will provide the details on this particular construction
in Theorem~\ref{thm:matrices} and Section~\ref{sec:matrices}.

A pool can test \emph{positive} or \emph{negative}.
The \emph{pool test results} are a vector
\begin{equation*}
  Y=(Y_i)_{i\in\{1,\dotsc,t\}}\in\{0,1\}^t\text,
\end{equation*}
where $Y_i=1$ if and only if pool~$i$ tests positive.
In particular, $(A^\top Y)_j$
is the number of pools containing~$j$ that tested positive.

The testing process is assumed to be noisy according to the noise model
\begin{equation}\label{eq:error_model}
  \Prob(Y_i=0\given(AX)_i=k)
  =(1-\pfp)\pfn^k\text.
\end{equation}
It depends on the number of positives in a pool
as well as on two parameters $\pfp,\pfn\in\Icc01$,
the false positive and false negative probability.
The error model~\eqref{eq:error_model} is for instance argued for in~\cite{ZhuRiveraBaron2020},
and we note that the false positive probability~$\pfp$
and false negative probability~$\pfn$
will in practice depend on the pool size~$q$, \ie\ the dilution.
Since we are reluctant to argue here for an error model
incorporating dilution due to pool size,
we treat~$\pfp$ and~$\pfn$ as parameters which will depend on~$q$
and need to be inferred from experiments.

The results of the pools are then decoded using the COMP
or NCOMP decoding algorithm.
\begin{definition}
  Let an $(n,q,m)$-multipool matrix be given.
  The \emph{Noisy Combinatorial Orthogonal Matching Pursuit}
  decoder with parameter~$\nc\in\{0,1,\dotsc,m\}$,
  abbreviated as $\NCOMP(\nc)$,
  declares an item as tested positive if and only if at most~$\nc$
  of the~$m$ pools which contain item~$j$ are not tested positive:
  \begin{equation*}
    \sum_{j\in\{1,\dotsc,t\}\colon A_{i,j}=1}Y_j
    =\sum_{i\in\{1,\dotsc,t\}}A_{i,j}Y_i
    =(A^\top Y)_j
    \ge m-\nc\text.
  \end{equation*}
  In the special case $\nc=0$, when an item is declared positive
  if and only if all of its pools test positive,
  this decoder is simply called the
  \emph{Combinatorial Orthogonal Matching Pursuit} decoder:
  $\COMP:=\NCOMP(0)$.
  If we do not want to specify the parameter~$\nc$,
  we simply write NCOMP.
\end{definition}

\begin{remark}
	COMP has been described by numerous authors where~\cite{KautzS-64}
  seems to be the first occurrence.
	The names COMP and NCOMP themselves seem to have been coined in~\cite{Chan2011}.
	We refer to~\cite{AldridgeJS-19} for a more thorough discussion.
	We also note that there exist other decoding algorithms
  in the mathematical literature such as the Definite Defective (DD)
  algorithm~\cite{AldridgeBJ-14}
  and the algorithm in~\cite{CojaOglanGHL-20}
  which relies on random constant column designs
  and which has been shown to be information-theoretically optimal
  in the~\emph{sublinear prevalence regime}.
	Furthermore, performance guarantees on COMP and DD in the sublinear regime
  have recently been investigated in~\cite{GebhardJLR-20}.
\end{remark}
	
The decoded results are a vector
\begin{equation*}
  Z=(Z_j)_{j\in\{1,\dotsc,n\}}\in\{0,1\}^n\text,
\end{equation*}
where $Z_j=1$ if COMP or NCOMP declares item~$j$ as positive and~$Z_j = 0$ otherwise.

\subsection{Results}\label{sec:results}

We first ensure the existence of pooling matrices
as in Definition~\ref{def:multipool}.

\begin{theorem}\label{thm:matrices}
  Let the pool size~$q$ be a prime number
  or a power of a prime and let the total number of items be $n=q^2$.
  Then, $(n,q,m)$-multipools exist if and only if the number of pools~$m$
  an item participates in satisfies $m\le q+1$.
\end{theorem}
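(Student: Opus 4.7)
The plan is to prove the two directions separately: necessity is a short double-counting, while sufficiency needs the algebraic construction referenced in the introduction, which I would sketch here and defer to Section~\ref{sec:matrices} for full detail.

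For \emph{necessity}, I would fix any item~$j$ and count the items that share a pool with it. By condition~(M2), item~$j$ lies in exactly~$m$ pools; by~(M1) each of these pools contains~$q-1$ items other than~$j$; and by~(M3) no other item can occur in two of these pools, so altogether $m(q-1)$ \emph{distinct} items are reached. Since these items all lie in $\{1,\dotsc,n\}\setminus\{j\}$ and $n=q^2$, this yields
\begin{equation*}
  m(q-1)\le q^2-1=(q-1)(q+1)\text,
\end{equation*}
hence $m\le q+1$, as claimed.

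For \emph{sufficiency}, my strategy is to first exhibit a maximal $(q^2,q,q+1)$-multipool and then obtain smaller values of~$m$ by deletion. When~$q$ is a prime power, the finite field~$\F_q$ is available, so I would identify the items with the points of the affine plane~$\F_q^2$. For each direction $s\in\F_q\union\{\infty\}$ the~$q$ affine lines of slope~$s$ partition $\F_q^2$ into~$q$ pools of size~$q$; ranging over all $q+1$ directions produces $q(q+1)$ pools. Each point lies on exactly one line of each direction, giving column sum $q+1$, and two distinct points determine a unique affine line and therefore share exactly one pool, which verifies~(M3). For any $m\le q+1$ one then removes $q+1-m$ complete parallel classes (each a pencil of~$q$ lines). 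Properties~(M1) and~(M3) are preserved, and every column sum drops uniformly by one per removed class, yielding~(M2) with the desired~$m$.

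The main obstacle is really the existence of the maximal design in the first place, which is where the prime-power hypothesis on~$q$ enters: the above argument tacitly uses that $\F_q^2$ is an affine plane of order~$q$, equivalently that a resolvable $2$-$(q^2,q,1)$ design exists. I would therefore present the Reed--Solomon / Shifted Transversal Design version of this construction explicitly, since it makes the pooling matrix concrete and matches the examples used later in the paper; the detailed verification of the three multipool axioms for that construction is the content deferred to Section~\ref{sec:matrices}. The necessity half and the ``deletion of parallel classes'' reduction are routine and need only the brief arguments above.
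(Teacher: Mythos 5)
Your proposal is correct and follows essentially the same route as the paper: the affine-line construction over~$\F_q^2$ with $q+1$ parallel classes, and the removal of whole parallel classes to realize every $m\le q+1$, is exactly the paper's argument. The only (inessential) difference is in the necessity direction, where you double-count the $m(q-1)$ distinct items sharing a pool with a fixed item, whereas the paper bounds the total number of pools by $\binom{n}{2}\binom{q}{2}^{-1}=q(q+1)$; both yield $m\le q+1$.
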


Such matrices have been studied in the literature and similar structures have been suggested for pooling strategies.
If~$q$ is a prime number, such matrices can be constructed by
the Shifted Transversal Design~\cite{Thierry-Mieg2006}.
If~$q$ is a power of a prime, they can be constructed by Reed-Solomon codes
and have been suggested for pooling \eg\ in~\cite{ErlichEtAl2015}.
We provide details on the construction of such matrices and illustrations
in Section~\ref{sec:matrices}.
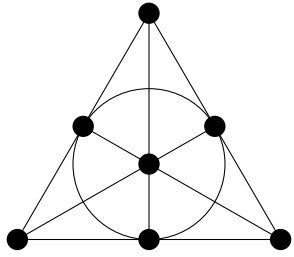
\begin{figure}[ht]
  \tikzsetnextfilename{fano}
  \begin{tikzpicture}
    \draw (210:2) -- (330:2) -- (90:2) -- cycle
          (210:2) -- (30:1)
          (330:2) -- (150:1)
          (90:2) -- (270:1);
    \draw (0,0) circle[radius=1cm];
    \fill[black] (0,0) circle[radius=4pt];
    \fill[black] (210:2) circle[radius=4pt];
    \fill[black] (330:2) circle[radius=4pt];
    \fill[black] (90:2) circle[radius=4pt];
    \fill[black] (30:1) circle[radius=4pt];
    \fill[black] (150:1) circle[radius=4pt];
    \fill[black] (270:1) circle[radius=4pt];
  \end{tikzpicture}
  \caption{This graph shows the \emph{Fano plane}
    with seven points and seven lines such that every point is contained
    in exactly three lines and two lines intersect in exactly one point.
    Thus, interpreting points as items and lines as pools,
    the Fano plane describes a $(7,3,3)$-multipool
    which is not of the form provided by Theorem~\ref{thm:matrices}.}
  \label{fig:Fano}
\end{figure}%

In the subsequent results, we only rely on the multipool structure,
outlined in Definition~\ref{def:multipool}.
While Theorem~\ref{thm:matrices}
ensures that corresponding pooling matrices exist for particular~$n$, $q$, $m$,
there exist more, see Figure~\ref{fig:Fano}.
The following Theorems are valid beyond the restrictions imposed
by the matrices considered in Theorem~\ref{thm:matrices}.

\begin{theorem}\label{thm:sens_and_spec}
  Let $\rho$, $q$, $m$, $\nc$, $\pfp$ and $\pfn$ be given.
  Then, for any suitable~$n$, in any $(n,q,m)$-multipooling strategy
  with decoding by $\NCOMP(\nc)$, the sensitivity is
  \begin{equation}\label{eq:sens}
    \sens
    =\Prob(Z_j=1\given X_j=1)
    =\sum_{k=m-\nc}^{m}\binom mk
      \bigl(1-\pfn\gamma_1\bigr)^k
      \bigl(\pfn\gamma_1\bigr)^{m-k}
    \text,
   \end{equation}
  and the specificity is
  \begin{equation}\label{eq:spec}
    \spec
    =\Prob(Z_j=0\given X_j=0)
    =1-\sum_{k=m-\nc}^{m}\binom mk\bigl(1-\gamma_1\bigr)^k\gamma_1^{m-k}
    \text,
  \end{equation}
  where
  \begin{equation}\label{eq:gamma_1}
    \gamma_1
    =\Prob(Y_i=0\given X_j=0)
    =(1-\pfp)(1-(1-\pfn)\rho)^{q-1}\text.
  \end{equation}
\end{theorem}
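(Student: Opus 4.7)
The plan is to fix an item $j$ and reduce the question to the distribution of the number $N_j$ of the $m$ pools containing $j$ that test negative. Write $I_1,\dotsc,I_m \subseteq \{1,\dotsc,n\}$ for the index sets of these $m$ pools and let $S = \bigcup_{r=1}^m (I_r\setminus\{j\})$. The multipool condition (M3) says that for $r\ne s$ the only common element of $I_r$ and $I_s$ is $j$ itself, so the $m$ sets $I_r\setminus\{j\}$ are pairwise disjoint and $|S| = m(q-1)$. Because the entries $X_\ell$ are i.i.d.\ Bernoulli($\rho$) and independent of $X_j$, the families $(X_\ell)_{\ell\in I_r\setminus\{j\}}$, $r=1,\dotsc,m$, are mutually independent given $X_j$.

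Next I would compute the marginal $\Prob(Y_i=0\mid X_j=x)$ for one fixed pool $i=I_r$. Given $X_j = x$, the number $K$ of infected items in $I_r\setminus\{j\}$ is $\Bin(q-1,\rho)$-distributed, so the number of positives in pool $i$ is $K+x$, and the noise model~\eqref{eq:error_model} together with the binomial theorem yield
\begin{align*}
  \Prob(Y_i=0\mid X_j=x)
  &=\sum_{k=0}^{q-1}\binom{q-1}{k}\rho^k(1-\rho)^{q-1-k}(1-\pfp)\pfn^{k+x}\\
  &=(1-\pfp)\pfn^x\bigl(1-(1-\pfn)\rho\bigr)^{q-1}
  =\pfn^x\gamma_1,
\end{align*}
which in particular recovers~\eqref{eq:gamma_1} for $x=0$.

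Since the noise is independent across pools and the $K$-variables for different $r$ depend on disjoint (and hence independent) blocks of $X$, the events $\{Y_{I_r}=0\}$ for $r=1,\dotsc,m$ are conditionally independent given $X_j$. Therefore, conditional on $X_j=x$, the count $N_j=\#\{r\colon Y_{I_r}=0\}$ is $\Bin(m,\pfn^x\gamma_1)$-distributed. This is the only place where the multipool property really enters, and arguing for this conditional independence cleanly is the main (minor) obstacle; everything else is a binomial computation.

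Finally, the NCOMP$(\nc)$ rule declares $Z_j=1$ exactly when $N_j\le\nc$, i.e.\ when at least $m-\nc$ pools test positive. Hence
\begin{equation*}
  \sens
  =\Prob(N_j\le\nc\mid X_j=1)
  =\sum_{k=m-\nc}^{m}\binom{m}{k}(1-\pfn\gamma_1)^{k}(\pfn\gamma_1)^{m-k},
\end{equation*}
after reindexing $k\leftrightarrow m-k$, giving~\eqref{eq:sens}. For specificity, the same reasoning with $x=0$ yields
\begin{equation*}
  \spec
  =1-\Prob(N_j\le\nc\mid X_j=0)
  =1-\sum_{k=m-\nc}^{m}\binom{m}{k}(1-\gamma_1)^{k}\gamma_1^{m-k},
\end{equation*}
which is~\eqref{eq:spec}. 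The proof will be complete once the conditional independence argument above is written out carefully.
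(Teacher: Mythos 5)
Your proposal is correct and follows essentially the same route as the paper: compute $\Prob(Y_i=0\mid X_j=x)=\pfn^x\gamma_1$ for a single pool by conditioning on the number of other infected items and applying the binomial theorem (the paper's Lemma~\ref{lemma:P(Y=0|X)} with $k=1$), then use conditional independence of the $m$ pool results given $X_j$ to obtain a $\Bin(m,\pfn^x\gamma_1)$ count and sum the tail. If anything, you are more explicit than the paper about why (M3) makes the sets $I_r\setminus\{j\}$ disjoint and hence why the conditional independence holds, a step the paper's proof merely asserts.
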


\begin{remark}
  Note that when the pool size~$q$ and~$\nc$,
  that is the maximal number of negative pools an item can be in
  and still be flagged positive, are fixed,
  then the sensitivity is \emph{decreasing} in the multiplicity~$m$
  while the specificity is \emph{increasing} in~$m$.
  This follows from the inequality
  \begin{equation*}
    \sum_{k=m-\nc}^m\binom mk(1-x)^k x^{m-k}
    \le\sum_{k=m+1-\nc}^{m+1}\binom{m+1}k(1-x)^kx^{m+1-k}\text,
  \end{equation*}
  which is elementary upon noticing that the left and right hand side
  denote the probability of obtaining at least~$\nc$
  heads when flipping a $(1-x)$-biased coin~$m$ or~$m+1$ times, respectively.
  This can also be graphically observed in Figures~\ref{fig:sens}
\begin{figure}[ht]
  \tikzsetnextfilename{sensitivity}
  \begin{tikzpicture}
    \datavisualization
    [ scientific axes
    , visualize as smooth line/.list={2,4,6,8,10}
    , x axis={ label={$\rho$}
             , ticks={ tick typesetter/.code=\twodecimals{##1} }
             }
    , y axis={ include value={.8, 1}
             , label={$\sens$}
             }
    , style sheet=strong colors
    , style sheet=vary dashing
    , legend={ anchor=north west
             , at={(.3,-1)}
             , max rows=1
             }
    , 2={label in legend={text={$m=2$}}}
    , 4={label in legend={text={$m=4$}}}
    , 6={label in legend={text={$m=6$}}}
    , 8={label in legend={text={$m=8$}}}
    , 10={label in legend={text={$m=10$}}}
    , data/format=function
    ]
    data {
      var set : {2,4,6,8,10};
      var x : interval [0:.201] samples 51;
      func y = sensofrho(\value x, \value{set}, 0, 16, .02, .02);
    }
    info {
      \node at (visualization cs: x=.15, y=.85) {COMP};
    }
    ;
    \begin{scope}[xshift=7cm]
      \datavisualization
      [ scientific axes
      , visualize as smooth line/.list={2,4,6,8,10}
      , x axis={ label={$\rho$}
               , ticks={ tick typesetter/.code=\twodecimals{##1} }
               }
      , y axis={ include value={.8, 1}
               , label={$\sens$}
               }
      , style sheet=strong colors
      , style sheet=vary dashing
      , data/format=function
      ]
      data {
        var set : {2,4,6,8,10};
        var x : interval [0:.201] samples 51;
        func y = sensofrho(\value x, \value{set}, 1, 16, .02, .02);
      }
      info {
        \node at (visualization cs: x=.15, y=.85) {NCOMP(1)};
      }
      ;
    \end{scope}
  \end{tikzpicture}
  \caption{The graphs show sensitivity of COMP and NCOMP(1) for pools of size
  $q=16$ and false positive and false negative probabilities $\pfp=\pfn=0.02$
  for multiplicities~$m\in\{2,4,6,8,10\}$
  against the prevalence~$\rho\in\Icc0{0.2}$.}
  \label{fig:sens}
\end{figure}%
  and~\ref{fig:spec}.
\begin{figure}[ht]
  \tikzsetnextfilename{specificity}
  \begin{tikzpicture}
    \datavisualization
    [ scientific axes
    , visualize as smooth line/.list={2,4,6,8,10}
    , x axis={ label={$\rho$}
             , ticks={ tick typesetter/.code=\twodecimals{##1} }
             }
    , y axis={ include value={0,1}
             , label={$\spec$}
             }
    , style sheet=strong colors
    , style sheet=vary dashing
    , legend={ anchor=north west
             , at={(.3,-1)}
             , max rows=1
             }
    , 2={label in legend={text={$m=2$}}}
    , 4={label in legend={text={$m=4$}}}
    , 6={label in legend={text={$m=6$}}}
    , 8={label in legend={text={$m=8$}}}
    , 10={label in legend={text={$m=10$}}}
    , data/format=function
    ]
    data {
      var set : {2,4,6,8,10};
      var x : interval [0:.3] samples 51;
      func y = specofrho(\value x, \value{set}, 0, 16, .02, .02);
    }
    info {
      \node at (visualization cs: x=.225, y=.8) {COMP};
    }
    ;
    \begin{scope}[xshift=7cm]
      \datavisualization
      [ scientific axes
      , visualize as smooth line/.list={2,4,6,8,10}
      , x axis={ label={$\rho$}
               , ticks={ tick typesetter/.code=\twodecimals{##1} }
               }
      , y axis={ include value={0,1}
               , label={$\spec$}
               }
      , style sheet=strong colors
      , style sheet=vary dashing
      , data/format=function
      ]
      data {
        var set : {2,4,6,8,10};
        var x : interval [0:.3] samples 51;
        func y = specofrho(\value x, \value{set}, 1, 16, .02, .02);
      }
      info {
        \node at (visualization cs: x=.225, y=.8) {NCOMP(1)};
      }
      ;
    \end{scope}
  \end{tikzpicture}
  \caption{The graphs show specificity of COMP and NCOMP(1)
    for pools of size $q=16$
    and false positive and false negative probabilities $\pfp=\pfn=0.02$.
    The specificity decreases when passing from COMP to NCOMP.
    Larger multiplicity~$m$ can mitigate this effect.}
  \label{fig:spec}
\end{figure}%
  Figure~\ref{fig:sens}
  furthermore illustrates the error-correcting effect
  of the NCOMP algorithm in the presence of noise.
  We see that while the sensitivity always decreases with growing multiplicity~$m$,
  passing from COMP to NCOMP can mitigate this effect.
  In conclusion, a good strategy in the presence of a non-negligible
  false negative probability~$\pfn$ is to use NCOMP for high sensitivity
  and then boost the specificity by larger multiplicities.
\end{remark}

We can now also provide expressions for the probabilities
of Type~I and Type~II errors.

\begin{corollary}\label{cor:type_I_and_II}
  Let $\rho$, $q$, $m$, $\nc$, $\pfp$ and $\pfn$ be given.
  Then, for any suitable~$n$,
  in any $(n,q,m)$-multipooling strategy with decoding by $\NCOMP(\nc)$,
  the probability of Type I errors is
  \begin{equation}\label{eq:TypeI}
    \typeI
    =\left(1+\frac\rho{1-\rho}\cdot\frac{\sens}{1-\spec}\right)^{-1}
    \text,
  \end{equation}
  and the probability of Type II errors is
  \begin{equation}\label{eq:TypeII}
    \typeII
    =\left(1+\frac{1-\rho}\rho\cdot\frac{\spec}{1-\sens}\right)^{-1}
    \text.
  \end{equation}
\end{corollary}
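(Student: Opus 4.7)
The plan is to derive both formulas as direct applications of Bayes' theorem. Once the sensitivity and specificity from Theorem~\ref{thm:sens_and_spec} are in hand, together with the prior $\Prob(X_j=1)=\rho$, neither the multipool structure nor the noise model reenters the argument: the entire content of the corollary is a conditional-probability identity.

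First I would expand the Type~I probability using Bayes' theorem,
\begin{equation*}
  \Prob(X_j=0\given Z_j=1)
  =\frac{\Prob(Z_j=1\given X_j=0)\,\Prob(X_j=0)}{\Prob(Z_j=1)}\text,
\end{equation*}
and evaluate the denominator via the law of total probability on the partition $\{X_j=0\}\cup\{X_j=1\}$, obtaining $(1-\spec)(1-\rho)+\sens\cdot\rho$. Substituting $\Prob(Z_j=1\given X_j=0)=1-\spec$ in the numerator and then dividing both numerator and denominator by $(1-\spec)(1-\rho)$ yields
\begin{equation*}
  \Prob(X_j=0\given Z_j=1)
  =\left(1+\frac{\rho}{1-\rho}\cdot\frac{\sens}{1-\spec}\right)^{-1}\text,
\end{equation*}
which is exactly~\eqref{eq:TypeI}.

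The Type~II error is handled symmetrically, with the roles of the two states swapped and $\sens$, $\spec$ interchanged with their complements. Bayes' theorem in this case gives
\begin{equation*}
  \Prob(X_j=1\given Z_j=0)
  =\frac{(1-\sens)\rho}{(1-\sens)\rho+\spec(1-\rho)}\text,
\end{equation*}
and the same algebraic rearrangement produces~\eqref{eq:TypeII}.

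There is no substantive obstacle; the statement is essentially a bookkeeping exercise once $\sens$ and $\spec$ are known. The only detail worth checking is that the denominators in the quoted expressions do not vanish. From the explicit formulas of Theorem~\ref{thm:sens_and_spec} one sees that for $\rho\in(0,1)$ and non-degenerate $\pfp,\pfn\in(0,1)$ all of $\sens$, $\spec$, $1-\sens$, $1-\spec$ lie strictly in $(0,1)$, so the identities are well-defined as written; the boundary cases can be recovered by continuity.
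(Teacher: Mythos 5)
Your argument is correct and coincides with the paper's own proof: both apply Bayes' formula, expand $\Prob(Z_j=1)$ by the law of total probability over $\{X_j=0\}\cup\{X_j=1\}$, and divide through to obtain the stated reciprocal form, with the Type~II case handled symmetrically. The additional remark on non-vanishing denominators is a sensible bonus but not a point of divergence.
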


\begin{remark}
  When decoding with the COMP decoder, \ie\ $\nc=0$,
  \eqref{eq:TypeII} simplifies to
  \begin{equation*}
    \Prob(X_j=0\given Z_j=1)
    =\left(1+
      \frac{\rho(1-\pfn\gamma_1)^m}
           {(1-\rho)(1-\gamma_1)^m}
     \right)^{-1}.
  \end{equation*}
  If we require the probability of Type I errors to be bounded by~$\epsilon>0$,
  then this condition can be solved for~$m$:
  \begin{equation}\label{eq:lower_bound_m}
    m\ge\log\left(\frac{1-\rho}\rho(\varepsilon^{-1}-1)\right)
      /\log\left(
      \frac{1-\pfn\gamma_1}
           {1-\gamma_1}
      \right)
  \end{equation}
  and provides a lower bound on the number of pools an item has to participate in.
  In the special case $\nc=0$ and $\pfp=\pfn=0$,
  we have $\gamma_1=(1-\rho)^{q-1}$ and recover the condition
  $\Prob(X_j=0\given Z_j=1)\le\varepsilon$ if and only if
  \begin{equation*}
    m\ge\log\left(\frac{1-\rho}\rho(\varepsilon^{-1}-1)\right)
      /\log\left(
      \frac1{1-(1-\rho)^{q-1}}
      \right)
  \end{equation*}
  from \cite{Taeufer-20}.
\end{remark}

\begin{remark}
  Let us study the Type I error probabilities in more detail.
  In Figure~\ref{fig:TypeI_noiseless_COMP},
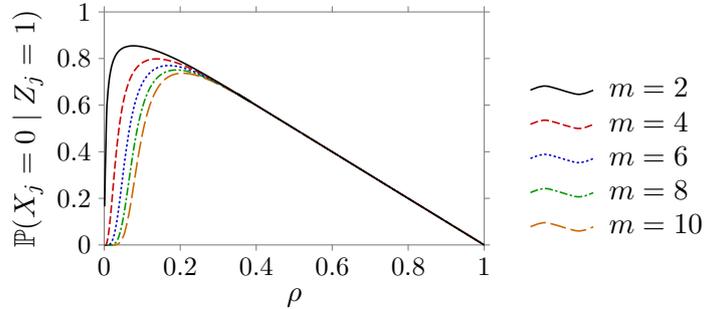
\begin{figure}[ht]
  \tikzsetnextfilename{typeICOMP00}
  \begin{tikzpicture}
    \datavisualization
    [ scientific axes
    , visualize as smooth line/.list={2,4,6,8,10}
    , x axis={ include value={0,1}
             , label={$\rho$}
             }
    , y axis={ include value={0,1}
             , label={$\typeI$}
             }
    , style sheet=strong colors
    , style sheet=vary dashing
    , legend=right
    , 2={label in legend={text={$m=2$}}}
    , 4={label in legend={text={$m=4$}}}
    , 6={label in legend={text={$m=6$}}}
    , 8={label in legend={text={$m=8$}}}
    , 10={label in legend={text={$m=10$}}}
    , data/format=function
    ]
    data {
      var set : {2,4,6,8,10};
      var x : {.001,.006,...,.051,.055,.065,...,0.4,1};
      func y = typeIofrho(\value x, \value{set}, 0, 16, 0, 0);
    }
    ;
  \end{tikzpicture}
  \label{subfig:typeICOMP00}
 \caption{Type I error rates for noiseless ($\pfp=\pfn=0$)
   testing with COMP for pools of size $q=16$
   and different multiplicities~$m$.}
 \label{fig:TypeI_noiseless_COMP}
\end{figure}%
  we see that in noiseless testing Type I errors
  emerge with growing prevalence rate and rapidly grow
  to approach the curve $f(\rho)=1-\rho$ for large~$\rho$.
  This is due to the fact that in this regime,
  a majority of pools will contain at least one positive item.
  Due to these \emph{combinatorial false positives},
  the whole test will become a useless oracle flagging every item positive.
  \par
  In Figure~\ref{fig:TypeI_noise}, we add noise.
\begin{figure}[ht]
  \tikzsetnextfilename{typeI202}
  \begin{tikzpicture}
    \datavisualization
    [ scientific axes
    , visualize as line/.list={2,4,6,8,10}
    , x axis={ include value={0,.1}
             , label={$\rho$}
             , ticks={ tick typesetter/.code=\twodecimals{##1} }
             }
    , y axis={ include value={0,1}
             , label={$\typeI$}
             }
    , style sheet=strong colors
    , style sheet=vary dashing
    , legend={ anchor=north west
             , at={(.3,-1)}
             , max rows=1
             }
    , 2={label in legend={text={$m=2$}}}
    , 4={label in legend={text={$m=4$}}}
    , 6={label in legend={text={$m=6$}}}
    , 8={label in legend={text={$m=8$}}}
    , 10={label in legend={text={$m=10$}}}
    , data/format=function
    ]
    data {
      var set : {2,4,6,8,10};
      var x : {0,.0005,...,.02995,.03,.032,...,.1};
      func y = typeIofrho(\value x, \value{set}, 0, 16, .2, .02);
    }
    info {
      \node at (visualization cs: x=.08, y=.2) {COMP};
    }
    ;
    \begin{scope}[xshift=7cm]
      \datavisualization
      [ scientific axes
      , visualize as line/.list={2,4,6,8,10}
      , x axis={ include value={0,.1}
               , label={$\rho$}
               , ticks={ tick typesetter/.code=\twodecimals{##1} }
               }
      , y axis={ include value={0,1}
               , label={$\typeI$}
               }
      , style sheet=strong colors
      , style sheet=vary dashing
      , data/format=function
      ]
      data {
        var set : {2,4,6,8,10};
        var x : {0,.0005,...,.02995,.03,.032,...,.1};
        func y = typeIofrho(\value x, \value{set}, 1, 16, .2, .02);
      }
      info {
        \node at (visualization cs: x=.075, y=.2) {NCOMP(1)};
      }
      ;
    \end{scope}
  \end{tikzpicture}
  \caption{Type I errors in the presence of noise ($\pfp=0.2$, $\pfn=0.02$)
    at small prevalence $\rho$ and for pools of size $q=16$.
    The false positive probability~$\pfp$
    has been chosen very high with $\pfp=0.2$
    in order to illustrate the screening paradox at small~$\rho$.
    For small prevalence~$\rho$,
    the graphs for NCOMP(1) with $m\in\{8,10\}$
    show numerical instablilty due to the smallness of $\Prob(Z_j=1)$.}
  \label{fig:TypeI_noise}
\end{figure}%
  In the presence of a non-zero false positive probability~$\pfp$,
  we observe another phenomenon, namely the \emph{screening paradox}
  in which for small enough~$\rho$,
  true positives are so rare that they are dominated by false positives
  arising from noisy measurements.
  In any case, we observe that both types of false positives
  can be reduced by larger multiplicity~$m$.
  \par
  Finally, we also emphasize that larger pool sizes negatively impact
  Type I error probabilities, \cf\ Figure~\ref{fig:TypeI_varying_pool_size}.
\begin{figure}[ht]
  \tikzsetnextfilename{typeI_q}
  \begin{tikzpicture}
    \datavisualization
    [ scientific axes
    , visualize as smooth line/.list={8,16,32,64}
    , x axis={ include value={0,1}
             , label={$\rho$}
             }
    , y axis={ include value={0,1}
             , label={$\typeI$}
             }
    , style sheet=strong colors
    , style sheet=vary dashing
    , legend={ anchor=north west
             , at={(1.3,-1)}
             , max rows=1
             }
    , 8={label in legend={text={$q=8$}}}
    , 16={label in legend={text={$q=16$}}}
    , 32={label in legend={text={$q=32$}}}
    , 64={label in legend={text={$q=64$}}}
    , data/format=function
    ]
    data {
      var set : {8,16,32,64};
      var x : {.001,.011,...,.041,.05,.07,...,.91,1};
      func y = typeIofrho(\value x, 6, 0, \value{set}, .02, .02);
    }
    info {
      \node at (visualization cs: x=.8, y=.8) {COMP};
    }
    ;
    \begin{scope}[xshift=7cm]
      \datavisualization
      [ scientific axes
      , visualize as smooth line/.list={8,16,32,64}
      , x axis={ include value={0,1}
               , label={$\rho$}
               }
      , y axis={ include value={0,1}
               , label={$\typeI$}
               }
      , style sheet=strong colors
      , style sheet=vary dashing
      , data/format=function
      ]
      data {
        var set : {4,8,16,32,64};
        var x : {.001,.002,...,.049,.05,.06,...,.19,.2,.25,...,.8,1};
        func y = typeIofrho(\value x, 6, 1, \value{set}, .02, .02);
      }
      info {
        \node at (visualization cs: x=.75, y=.8) {NCOMP(1)};
      }
      ;
    \end{scope}
  \end{tikzpicture}
  \caption{Type I errors for different pool sizes in the presence of noise
    ($\pfp=\pfn=0.02$) with multiplicity $m=6$
    and decoding with COMP and NCOMP(1).}
 \label{fig:TypeI_varying_pool_size}
\end{figure}
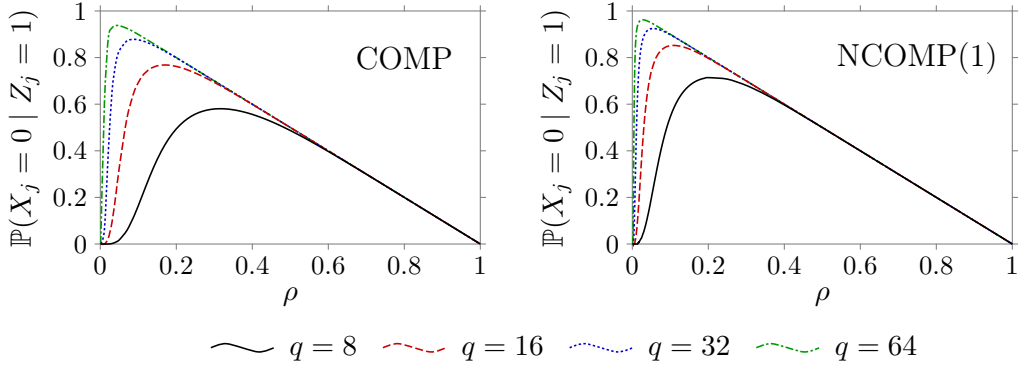%
  This again is remedied by larger multiplicity~$m$,
  where, in the light of~\eqref{eq:lower_bound_m},
  the necessary~$m$ grows only logarithmically with growing pool size
  such that after all the compression ratio
  rapidly improves with larger pool sizes.
\end{remark}

A consequence of Theorem~\ref{thm:sens_and_spec}
are expressions for the expected number of positive results~$T$,
of false positive, and false negative results $\Tfp$ and $\Tfn$
in screening strategies respectively.

\begin{corollary}\label{cor:expectation}
  Let $\rho$, $n$, $q$, $m$, $\nc$, $\pfp$ and $\pfn$ be given.
  Then, in any $(n,q,m)$-multipooling strategy
  with decoding by NCOMP with parameter~$\nc$,
  the expected number of all positive results~$T$ is
  \begin{equation}\label{eq:E[T]}
    \E[T]
    =n\bigl(\rho\cdot\sens+(1-\rho)(1-\spec)\bigr)\text,
  \end{equation}
  the expectation of the number~$\Tfp$ of all false positive results is
  \begin{equation}\label{eq:E[Tfp]}
    \E[\Tfp]
    =n(1-\rho)(1-\spec)\text,
  \end{equation}
  and the number~$\Tfn$ of all false negative results has expectation
  \begin{equation}\label{eq:E[Tfn]}
    \E[\Tfn]
    =n\rho(1-\sens)\text,
  \end{equation}
  where $\sens$ and $\spec$ are given in~\eqref{eq:sens}
  and~\eqref{eq:spec}.
\end{corollary}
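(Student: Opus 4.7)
The three statements reduce, in essentially the same way, to linearity of expectation applied to suitable indicator decompositions, combined with the conditional probabilities already provided by Theorem~\ref{thm:sens_and_spec}. First I would write
\begin{equation*}
  T=\sum_{j=1}^n Z_j\text,\qquad
  \Tfp=\sum_{j=1}^n(1-X_j)Z_j\text,\qquad
  \Tfn=\sum_{j=1}^n X_j(1-Z_j)\text.
\end{equation*}
Linearity of expectation then turns each identity into the problem of computing $\Prob(Z_j=1)$, $\Prob(X_j=0,Z_j=1)$, and $\Prob(X_j=1,Z_j=0)$ for a single, but generic, index~$j$.

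Next I would observe that, by the multipool structure from Definition~\ref{def:multipool}, each item~$j$ participates in exactly~$m$ pools which are pairwise disjoint outside of~$j$; thus the law of $Z_j$ conditional on~$X_j$ depends on the other $X_i$ only through the \iid\ Bernoulli$(\rho)$ distribution restricted to the $m(q-1)$ neighbours of~$j$. In particular, the conditional probabilities $\Prob(Z_j=1\given X_j=1)$ and $\Prob(Z_j=0\given X_j=0)$ do not depend on~$j$, and by Theorem~\ref{thm:sens_and_spec} they equal $\sens$ and $\spec$, respectively.

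Now \eqref{eq:E[Tfp]} and \eqref{eq:E[Tfn]} follow immediately: for each $j$,
\begin{equation*}
  \E[(1-X_j)Z_j]
  =\Prob(X_j=0)\,\Prob(Z_j=1\given X_j=0)
  =(1-\rho)(1-\spec)\text,
\end{equation*}
and analogously $\E[X_j(1-Z_j)]=\rho(1-\sens)$; summing over~$j$ yields~$n$ times the respective quantity. For \eqref{eq:E[T]} I would apply the law of total probability,
\begin{equation*}
  \Prob(Z_j=1)
  =\rho\,\Prob(Z_j=1\given X_j=1)+(1-\rho)\,\Prob(Z_j=1\given X_j=0)
  =\rho\sens+(1-\rho)(1-\spec)\text,
\end{equation*}
and once more multiply by~$n$.

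There is essentially no obstacle here; the work has already been done in Theorem~\ref{thm:sens_and_spec}. The only point that deserves a brief sentence of justification is the symmetry argument showing that the per-item conditional probabilities are identical across~$j$, which is what allows linearity of expectation to collapse the $n$ summands to a single factor.
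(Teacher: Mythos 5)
Your proposal is correct and follows essentially the same route as the paper: decompose $T$, $\Tfp$, $\Tfn$ into per-item indicators, apply linearity of expectation, and evaluate the single-item probabilities via the law of total probability and the conditional probabilities $\sens$ and $\spec$ from Theorem~\ref{thm:sens_and_spec}. Your explicit remark that the per-item probabilities are independent of~$j$ is a point the paper leaves implicit (it simply writes $n\Prob(Z_1=1)$), but it is the same argument.
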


\begin{remark}
  A noteworthy observation is that Corollary~\ref{cor:expectation}
  only relies on the three conditions in Definition~\ref{def:multipool},
  \ie\ constant row sum, constant column sum,
  scalar product between columns at most one.
  Thus, these conditions alone already determine
  the expected number of positives, false positives and false negatives.
  In particular, imposing further conditions on the pooling matrices
  will not reduce the expected number of false positives in COMP and NCOMP.
\end{remark}

\begin{remark}
  The expressions~\eqref{eq:E[T]}, \eqref{eq:E[Tfp]}, and~\eqref{eq:E[Tfn]}
  for $\E[T]$, $\E[\Tfp]$, and $\E[\Tfn]$
  are illustrated in Figure~\ref{fig:3ETs}
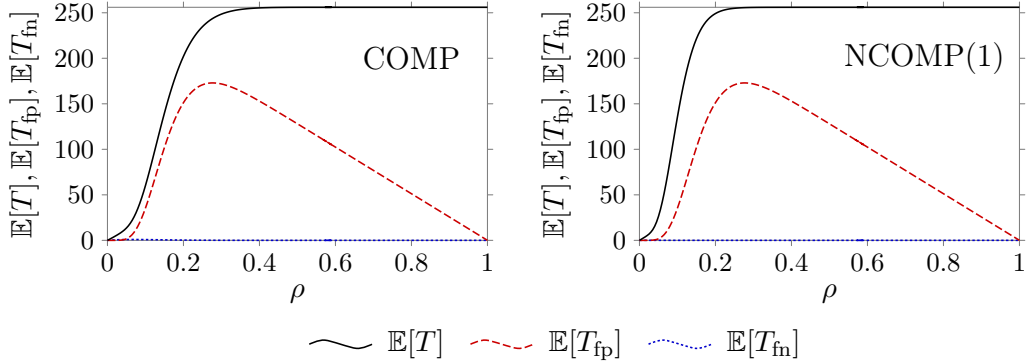
\begin{figure}[ht]
  \tikzsetnextfilename{3ETs}
  \begin{tikzpicture}
    \datavisualization
    [ scientific axes
    , visualize as smooth line/.list={ET,ETfp,ETfn}
    , x axis={ include value={0,1}
             , label={$\rho$}
             }
    , y axis={ include value={0,1}
             , label={$\E[T], \E[\Tfp], \E[\Tfn]$}
             }
    , style sheet=strong colors
    , style sheet=vary dashing
    , legend={ anchor=north west
             , at={(2.5,-1)}
             , max rows=1
             }
    , ET={label in legend={text={$\E[T]$}}}
    , ETfp={label in legend={text={$\E[\Tfp]$}}}
    , ETfn={label in legend={text={$\E[\Tfn]$}}}
    , data/format=function
    ]
    data [set = ET] {
      var x : {0,.01,...,.6,1};
      func y = ET(\value x, 8, 0, 16, .02, .02, 256);
    }
    data [set = ETfp] {
      var x : {0,.01,...,.6,1};
      func y = ETfp(\value x, 8, 0, 16, .02, .02, 256);
    }
    data [set = ETfn] {
      var x : {0,.01,...,.6,1};
      func y = ETfn(\value x, 8, 0, 16, .02, .02, 256);
    }
    info {
      \node at (visualization cs: x=.8, y=200) {COMP};
    }
    ;
    \begin{scope}[xshift=7cm]
      \datavisualization
      [ scientific axes
      , visualize as smooth line/.list={ET,ETfp,ETfn}
      , x axis={ include value={0,1}
               , label={$\rho$}
               }
      , y axis={ include value={0,1}
               , label={$\E[T], \E[\Tfp], \E[\Tfn]$}
               }
      , style sheet=strong colors
      , style sheet=vary dashing
      , data/format=function
      ]
      data [set = ET] {
        var x : {0,.01,...,.6,1};
        func y = ET(\value x, 8, 1, 16, .02, .02, 256);
      }
      data [set = ETfp] {
        var x : {0,.01,...,.6,1};
        func y = ETfp(\value x, 8, 0, 16, .02, .02, 256);
      }
      data [set = ETfn] {
        var x : {0,.01,...,.6,1};
        func y = ETfn(\value x, 8, 1, 16, .02, .02, 256);
      }
      info {
        \node at (visualization cs: x=.75, y=200) {NCOMP(1)};
      }
      ;
    \end{scope}
  \end{tikzpicture}
  \caption{Expected number of positives, false positives, and false negatives
    in a sample of~$256$ items for pools of size $q=16$ and multiplicity $m=8$
    in the presence of noise ($\pfp=\pfn=0.02$).}
  \label{fig:3ETs}
\end{figure}%
  for~$256$ items in a multipool with pool size $q=16$,
  multiplicity $m=8$, and decoding with COMP and NCOMP(1), respectively.
  We observe again the phase transition from small~$\rho$,
  where non-adaptive testing works well, to moderate~$\rho$
  where essentially all~$256$ items are flagged positive.
  We also see that higher multiplicities help delaying this transition
  to higher~$\rho$, \cf\ Figure~\ref{fig:ETs}.
\begin{figure}[ht]
  \tikzsetnextfilename{ETs}
  \begin{tikzpicture}
    \datavisualization
    [ scientific axes
    , visualize as smooth line/.list={2,4,6,8,10}
    , visualize as line/.list={EBinnrho}
    , x axis={ include value={0,1}
             , label={$\rho$}
             , ticks={ tick typesetter/.code=\twodecimals{##1} }
             }
    , y axis={ include value={0,1}
             , label={$\E[T]$}
             }
    , style sheet=strong colors
    , style sheet=vary dashing
    , legend={ anchor=north west
             , at={(.3,-1)}
             , max rows=2
             }
    , 2={label in legend={text={$m=2$}}}
    , 4={label in legend={text={$m=4$}}}
    , 6={label in legend={text={$m=6$}}}
    , 8={label in legend={text={$m=8$}}}
    , 10={label in legend={text={$m=10$}}}
    , EBinnrho={label in legend={text={Expected number of true positives}}}
    , data/format=function
    ]
    data {
      var set : {2,4,6,8,10};
      var x : {0,.01,...,.6,1};
      func y = ET(\value x, \value{set}, 0, 16, .02, .02, 256);
    }
    data [set=EBinnrho] {
      var x : {0,1};
      func y = 256 * \value x;
    }
    info {
      \node at (visualization cs: x=.8, y=50) {COMP};
    }
    ;
    \begin{scope}[xshift=7cm]
      \datavisualization
      [ scientific axes
      , visualize as smooth line/.list={2,4,6,8,10,EBinnrho}
      , x axis={ include value={0,1}
               , label={$\rho$}
               , ticks={ tick typesetter/.code=\twodecimals{##1} }
               }
      , y axis={ include value={0,1}
               , label={$\E[T]$}
               }
      , style sheet=strong colors
      , style sheet=vary dashing
      , data/format=function
      ]
      data {
        var set : {2,4,6,8,10};
        var x : {0,.01,...,.6,1};
        func y = ET(\value x, \value{set}, 1, 16, .02, .02, 256);
      }
      data [set=EBinnrho] {
        var x : {0,1};
        func y = 256 * \value x;
      }
      info {
        \node at (visualization cs: x=.75, y=50) {NCOMP(1)};
      }
      ;
    \end{scope}
  \end{tikzpicture}
  \caption{Expected number of positive results in a sample of~$256$
    items for pools of size $q=16$
    and different multiplicities in the presence of noise ($\pfp=\pfn=0.02$).}
  \label{fig:ETs}
\end{figure}%
  If we focus on small~$\rho$,
  we see that the expected number of positives
  follows the expected number of true positives
  before it starts diverging.
  This happens later for larger multiplicities, \cf\ Figure~\ref{fig:ETszoom}.
\begin{figure}
  \tikzsetnextfilename{ETszoom}
  \begin{tikzpicture}
    \datavisualization
    [ scientific axes
    , visualize as smooth line/.list={2,4,6,8,10}
    , visualize as line/.list={EBinnrho}
    , x axis={ include value={0,.1}
             , label={$\rho$}
             , ticks={ tick typesetter/.code=\twodecimals{##1} }
             }
    , y axis={ include value={0,256}
             , label={$\E[T]$}
             }
    , style sheet=strong colors
    , style sheet=vary dashing
    , legend={ anchor=north west
             , at={(.3,-1)}
             , max rows=2
             }
    , 2={label in legend={text={$m=2$}}}
    , 4={label in legend={text={$m=4$}}}
    , 6={label in legend={text={$m=6$}}}
    , 8={label in legend={text={$m=8$}}}
    , 10={label in legend={text={$m=10$}}}
    , EBinnrho={label in legend={text={Expected number of true positives}}}
    , data/format=function
    ]
    data {
      var set : {2,4,6,8,10};
      var x : interval [0:.1];
      func y = ET(\value x, \value{set}, 0, 16, .02, .02, 256);
    }
    data [set=EBinnrho] {
      var x : {0,.1};
      func y = 256 * \value x;
    }
    info {
      \node at (visualization cs: x=.02, y=220) {COMP};
    }
    ;
    \begin{scope}[xshift=7cm]
      \datavisualization
      [ scientific axes
      , visualize as smooth line/.list={2,4,6,8,10,EBinnrho}
      , x axis={ include value={0,.1}
               , label={$\rho$}
               , ticks={ tick typesetter/.code=\twodecimals{##1} }
               }
      , y axis={ include value={0,256}
               , label={$\E[T]$}
               }
      , style sheet=strong colors
      , style sheet=vary dashing
      , data/format=function
      ]
      data {
        var set : {2,4,6,8,10};
        var x : interval [0:.1];
        func y = ET(\value x, \value{set}, 1, 16, .02, .02, 256);
      }
      data [set=EBinnrho] {
        var x : {0,.1};
        func y = 256 * \value x;
      }
      info {
        \node at (visualization cs: x=.025, y=220) {NCOMP(1)};
      }
      ;
    \end{scope}
  \end{tikzpicture}
  \caption{Expected number of positive results at small prevalence $\rho$
    in a sample of~$256$ items for pools of size $q=16$ and different multiplicities in the presence of noise ($\pfp=\pfn=0.02$).
    With growing multiplicity $m$, $\E[T]$ approaches the expected number of true positives from above at small $\rho$ before diverging from it with increasing $\rho$.}
  \label{fig:ETszoom}
\end{figure}%
  Finally, Figure~\ref{fig:ETTfp_and_Tfn}
\begin{figure}[ht]
  \tikzsetnextfilename{ETfps}
  \begin{tikzpicture}
    \datavisualization
    [ scientific axes
    , visualize as smooth line/.list={2,4,6,8,10}
    , x axis={ include value={0,1}
             , label={$\rho$}
             , ticks={ tick typesetter/.code=\twodecimals{##1} }
             }
    , y axis={ include value={0,230}
             , label={$\E[\Tfp]$}
             }
    , style sheet=strong colors
    , style sheet=vary dashing
    , data/format=function
    ]
    data {
      var set : {2,4,6,8,10};
      var x : {0,.01,...,.6,1};
      func y = ETfp(\value x, \value{set}, 0, 16, .02, .02, 256);
    }
    info {
      \node at (visualization cs: x=.8, y=180) {COMP};
    }
    ;
    \begin{scope}[xshift=7cm]
      \datavisualization
      [ scientific axes
      , visualize as smooth line/.list={2,4,6,8,10}
      , x axis={ include value={0,1}
               , label={$\rho$}
               , ticks={ tick typesetter/.code=\twodecimals{##1} }
               }
      , y axis={ include value={0,230}
               , label={$\E[\Tfp]$}
               }
      , style sheet=strong colors
      , style sheet=vary dashing
      , data/format=function
      ]
      data {
        var set : {2,4,6,8,10};
        var x : {0,.01,...,.6,1};
        func y = ETfp(\value x, \value{set}, 1, 16, .02, .02, 256);
      }
      info {
        \node at (visualization cs: x=.75, y=180) {NCOMP(1)};
      }
      ;
    \end{scope}
  \end{tikzpicture}

  \tikzsetnextfilename{ETfns}
  \begin{tikzpicture}
    \datavisualization
    [ scientific axes
    , visualize as smooth line/.list={2,4,6,8,10}
    , x axis={ include value={0,1}
             , label={$\rho$}
             , ticks={ tick typesetter/.code=\twodecimals{##1} }
             }
    , y axis={ include value={0,1.2}
             , label={$\E[\Tfn]$}
             }
    , style sheet=strong colors
    , style sheet=vary dashing
    , legend={ anchor=north west
             , at={(.3,-1)}
             , max rows=1
             }
    , 2={label in legend={text={$m=2$}}}
    , 4={label in legend={text={$m=4$}}}
    , 6={label in legend={text={$m=6$}}}
    , 8={label in legend={text={$m=8$}}}
    , 10={label in legend={text={$m=10$}}}
    , data/format=function
    ]
    data {
      var set : {2,4,6,8,10};
      var x : {0,.01,...,.6,1};
      func y = ETfn(\value x, \value{set}, 0, 16, .02, .02, 256);
    }
    info {
      \node at (visualization cs: x=.8, y=.25) {COMP};
    }
    ;
    \begin{scope}[xshift=7cm]
      \datavisualization
      [ scientific axes
      , visualize as smooth line/.list={2,4,6,8,10}
      , x axis={ include value={0,1}
               , label={$\rho$}
               , ticks={ tick typesetter/.code=\twodecimals{##1} }
               }
      , y axis={ include value={0,1.2}
               , label={$\E[\Tfn]$}
               }
      , style sheet=strong colors
      , style sheet=vary dashing
      , data/format=function
      ]
      data {
        var set : {2,4,6,8,10};
        var x : {0,.01,...,.6,1};
        func y = ETfn(\value x, \value{set}, 1, 16, .02, .02, 256);
      }
      info {
        \node at (visualization cs: x=.75, y=.25) {NCOMP(1)};
      }
      ;
    \end{scope}
  \end{tikzpicture}
  \caption{Expected number of false positives and false negatives
    in a sample of~$256$ items for pools of size $q=16$
    in the presence of noise ($\pfp=\pfn=0.02$) for COMP and NCOMP(1).
    Note that passing from COMP to NCOMP(1)
    slashes the false negative probability.}
  \label{fig:ETTfp_and_Tfn}
\end{figure}%
  illustrates the interplay
  between the expected number of false positives and COMP and NCOMP(1):
  Passing from COMP to NCOMP(1) will increase
  the expected number of false positives,
  but in the presence of a non-negligible false negative probability,
  NCOMP(1) will also slash the expected number of false negatives close to~$0$.
\end{remark}

In order to better understand the random variables~$T$ and~$\Tfp$,
we provide bounds on their variance.
We restrict ourselves to the case $\nc=0$, \ie\ decoding by COMP,
and $\pfn=\pfp=0$, that is noiseless testing.
While it is possible to fix a particular matrix,
and write down analytic expressions for this variance,
we provide here a universal estimate on the variances which only relies
on the multipool structure of Definition~\ref{def:multipool}.
Its proof uses the Efron--Stein estimate and
is given in Section~\ref{sec:Efron-Stein}.

\begin{theorem}\label{thm:variance}
  If $\pfp=\pfn=0$ (noiseless testing) and $\nc=0$ (decoding by COMP),
  in any $(n,q,m)$-multipool strategy, we have for
  the expectations of the number~$T$ of positive results
  and of the number~$\Tfp$ of false positive results:
  \marginpar{\red{$\helper$?}}%
  \begin{align}
    \Var[T]&
    \le nmq\rho(1-\rho)\bigl(1-\helper^m+m(q-1)(1-\rho)^{q-1}\helper^{m-1}\bigr)
    \text,\label{eq:Variance_T}\\
    \Var[\Tfp]&
    \le nmq\rho(1-\rho)\bigl(\helper^m+m(q-1)(1-\rho)^{q-1}\helper^{m-1}\bigr)
    \text,\label{eq:Variance_Tfp}
  \end{align}
  where $\helper=1-(1-\rho)^{q-1}$.
\end{theorem}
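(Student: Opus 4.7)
The plan is to apply the Efron--Stein inequality
$\Var[f]\le\sum_{i=1}^n\E[\Var(f\given X_{-i})]$
to $f=T$ and $f=\Tfp$. Since each $X_i$ is Bernoulli with parameter~$\rho$ independent of the others, the inner conditional variance equals $\rho(1-\rho)(f_i^1-f_i^0)^2$, where $f_i^a$ denotes $f$ evaluated with $X_i$ fixed to $a\in\{0,1\}$ and the remaining coordinates $X_{-i}$ random. So the task reduces to bounding $\E[(f_i^1-f_i^0)^2]$ by a quantity depending only on $m$, $q$, $\rho$, which upon summation in~$i$ produces the factor~$n$ in both \eqref{eq:Variance_T} and~\eqref{eq:Variance_Tfp}.

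Next, I would fix~$i$ and set $D_j:=Z_j^1-Z_j^0$ for every~$j$. Because $Z_j$ is monotone non-decreasing in~$X$ under noiseless COMP, each $D_j\in\{0,1\}$. For $j=i$ one has $Z_i^1=1$, so $D_i=1-Z_i^0$; using that the~$m$ pools through~$i$ are pairwise disjoint outside~$\{i\}$ by~(M3), $\Prob(D_i=1)=1-\helper^m$ with $\helper=1-(1-\rho)^{q-1}$. For $j\neq i$, $D_j\neq 0$ forces~$j$ into the neighbourhood $N_i$ of items sharing a pool with~$i$, which has size~$m(q-1)$ by~(M3); the pools through any fixed~$j\in N_i$ pairwise meet only in~$j$. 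Unwinding the COMP rule, $D_j=1$ iff $X_j=0$, the unique pool $p^*(i,j)$ shared with~$i$ contains no further infected item, and each of the remaining $m-1$ pools through~$j$ contains at least one infected item. All coordinates of $X_{-i}$ involved are pairwise distinct by~(M3), so $\Prob(D_j=1)=(1-\rho)^{q-1}\helper^{m-1}$.

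For~$T$, the telescoping $T_i^1-T_i^0=D_i+S_i$ with $S_i:=\sum_{j\in N_i}D_j$ is a sum of $\{0,1\}$-valued random variables indexed over $1+m(q-1)\le mq$ items, so $(D_i+S_i)^2\le mq\,(D_i+S_i)$. Taking expectations yields
\begin{equation*}
  \E[(D_i+S_i)^2]\le mq\bigl(1-\helper^m+m(q-1)(1-\rho)^{q-1}\helper^{m-1}\bigr)\text,
\end{equation*}
and summing in~$i$ and multiplying by $\rho(1-\rho)$ gives~\eqref{eq:Variance_T}.

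For~$\Tfp$, the contribution of $j=i$ picks up a sign flip because $X_i$ changes from~$0$ to~$1$, while the $j\in N_i$ terms keep their sign since $D_j=1$ already forces $X_j=0$; one obtains $\Tfp_i^1-\Tfp_i^0=S_i-Z_i^0$. The key observation that I expect to be the crux of the argument is the identity $S_i\cdot Z_i^0=0$: if any $D_j=1$ for $j\in N_i$, then the pool $p^*(i,j)$ contains no infected item at all when $X_i=0$, precluding $Z_i^0=1$. Combined with $Z_i^0\in\{0,1\}$, this gives $(S_i-Z_i^0)^2=S_i^2+Z_i^0\le mq(S_i+Z_i^0)$, after which~\eqref{eq:Variance_Tfp} follows exactly as for~$T$. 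The main obstacle is the combinatorial bookkeeping---verifying via~(M3) that the Bernoulli factors entering each indicator live on disjoint coordinates---together with spotting $S_i\cdot Z_i^0=0$, which prevents the $\Tfp$-bound from acquiring additional mixed terms.
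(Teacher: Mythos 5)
Your proof is correct and follows essentially the same route as the paper: Efron--Stein applied coordinate-wise, the discrete derivative split into the self-term $D_i$ plus the pivotal-neighbour sum over the $m(q-1)$ items sharing a pool with~$i$, the same two probability computations via (M3), and the same Cauchy--Schwarz-type bound on a sum of indicators. The one divergence, the identity $S_i\cdot Z_i^0=0$ for the $\Tfp$ case, is true but not actually needed: the paper reaches the identical bound by applying $\abs{\sum_k a_k}^2\le l\sum_k\abs{a_k}^2$ to all $m(q-1)+1$ terms at once, so it is a nice observation rather than the crux.
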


\begin{remark}
  Figure~\ref{fig:variance} illustrates the bounds~\eqref{eq:Variance_T}
  and~\eqref{eq:Variance_Tfp} on the variances.
\begin{figure}[ht]
  \tikzsetnextfilename{Varbound}
  \begin{tikzpicture}
    \datavisualization
    [ scientific axes
    , visualize as smooth line/.list={2,4,6,8,10}
    , x axis={ label={$\rho$} }
    , y axis={ include value={0,1}
             , label={Bound on $\Var[T]$}
             , ticks={ tick suffix=$\,000$
                     , major={ no tick text at=0 }
                     }
             }
    , style sheet=strong colors
    , style sheet=vary dashing
    , legend={ anchor=north west
             , at={(.3,-1)}
             , max rows=1
             }
    , 2={label in legend={text={$m=2$}}}
    , 4={label in legend={text={$m=4$}}}
    , 6={label in legend={text={$m=6$}}}
    , 8={label in legend={text={$m=8$}}}
    , 10={label in legend={text={$m=10$}}}
    , data/format=function
    ]
    data {
      var set : {2,4,6,8,10};
      var x : {0,.01,...,.6,1};
      func y = VarTupper(\value x, \value{set}, 16, .256);
    }
    ;
    \begin{scope}[xshift=7cm]
      \datavisualization
      [ scientific axes
      , visualize as smooth line/.list={2,4,6,8,10}
      , x axis={label={$\rho$}}
      , y axis={ include value={0,1}
               , label={Bound on $\Var[\Tfp]$}
               , ticks={ tick suffix=$\,000$
                       , major={ no tick text at=0 }
                       }
               }
      , style sheet=strong colors
      , style sheet=vary dashing
      , data/format=function
      ]
      data {
        var set : {2,4,6,8,10};
        var x : interval [0:1] samples 51;
        func y = VarTfpupper(\value x, \value{set}, 16, .256);
      }
      ;
    \end{scope}
  \end{tikzpicture}
  \caption{The graphs illustrate the bounds~\eqref{eq:Variance_T}
    and~\eqref{eq:Variance_Tfp} on the variance of~$T$ and~$\Tfp$
    for noiseless testing in pools of size~$16$
    taken from a sample with~$256$ items
    for different multiplicities~$m$.}
\label{fig:variance}
\end{figure}
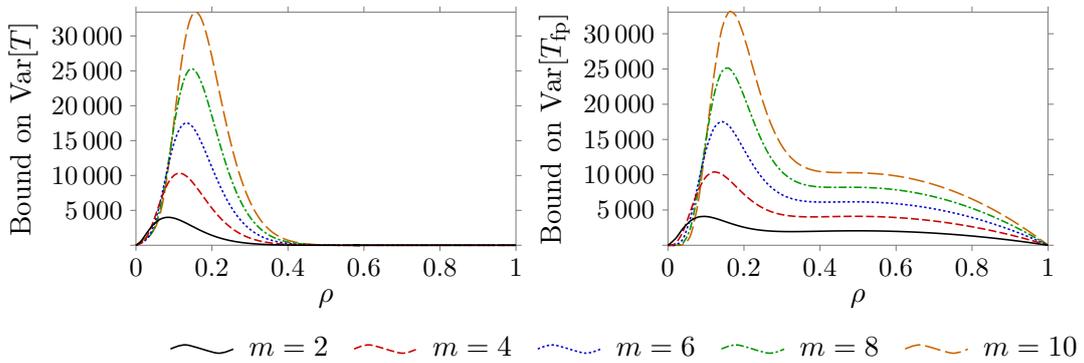%
  Again, we see the transition from low variance at small prevalence~$\rho$
  to huge variance at medium~$\rho$
  until the decoding will flag essentially everyone positive,
  thus for large~$\rho$, it will become a useless prediction.
  In this regime,~$T$ will have zero variance, and the variance~$\Tfp$
  is essentially the variance of the binomial distribution of uninfected items.

  More interesting observations can be made by zooming in on small~$\rho$,
  see Figure~\ref{fig:Varzoom}.
\begin{figure}[ht]
  \tikzsetnextfilename{Varboundzoom}
  \begin{tikzpicture}
    \datavisualization
    [ scientific axes
    , visualize as smooth line/.list={2,4,6,8,10,bin}
    , x axis={ label={$\rho$}
             , ticks={ step=.02
                     , minor steps between steps=1
                     , tick typesetter/.code=\twodecimals{##1}
                     }
             }
    , y axis={ include value={0,9}
             , label={Bound on $\Var[T]$}
             , ticks={ tick suffix=$\,000$
                     , major={ no tick text at=0 }
                     }
             }
    , style sheet=strong colors
    , style sheet=vary dashing
    , legend={ anchor=north west
             , at={(.3,-1)}
             , max rows=2
             }
    , 2={label in legend={text={$m=2$}}}
    , 4={label in legend={text={$m=4$}}}
    , 6={label in legend={text={$m=6$}}}
    , 8={label in legend={text={$m=8$}}}
    , 10={label in legend={text={$m=10$}}}
    , bin={label in legend={text={$256\rho(1-\rho)$}}}
    , data/format=function
    ]
    data {
      var set : {2,4,6,8,10};
      var x : interval [0:.075] samples 51;
      func y = VarTupper(\value x, \value{set}, 16, .256);
    }
    data [set=bin] {
      var x : interval [0:.075];
      func y = 256*\value x*(1-\value x)/1000;
    }
    ;
    \begin{scope}[xshift=7cm]
      \datavisualization
      [ scientific axes
      , visualize as smooth line/.list={2,4,6,8,10}
      , x axis={ label={$\rho$}
               , ticks={ step=.02
                       , minor steps between steps=1
                       , tick typesetter/.code=\twodecimals{##1}
                       }
               }
      , y axis={ include value={0,9}
               , label={Bound on $\Var[\Tfp]$}
               , ticks={ tick suffix=$\,000$
                       , major={ no tick text at=0 }
                       }
               }
      , style sheet=strong colors
      , style sheet=vary dashing
      , data/format=function
      ]
      data {
        var set : {2,4,6,8,10};
        var x : interval [0:0.075] samples 51;
        func y = VarTfpupper(\value x, \value{set}, 16, .256);
      }
      ;
    \end{scope}
  \end{tikzpicture}
  \caption{Bounds on~$\Var[T]$ and~$\Var[\Tfp]$ for small prevalence~$\rho$,
    noiseless testing, and pools of size $q=16$
    taken from a sample with~$256$ items.
  }
  \label{fig:Varzoom}
\end{figure}%
  Firstly, we see that our bound on~$\Var[T]$ exceeds $256\rho(1-\rho)$,
  the variance of the true number of infected items,
  even in a regime where the expected numbers are rather close.
  Since this difference exceeds our upper bound on~$\Var[\Tfp]$,
  we conclude that our bound on~$\Var[T]$ is far from sharp.
  However, the bound on~$\Var[\Tfp]$
  seems to be closer to optimal since it remains near zero for small~$\rho$
  and then grows with a steeper incline,
  carrying features of a phase transition.
  This effect becomes more prominent with increasing multiplicity~$m$,
  \cf\ Figure~\ref{fig:Varzoom}.

  One could argue that a very large variance is preferable
  to a moderately large one.
  Indeed, in the latter case,
  one will have the odd run with a false positive result
  which is hard to identify,
  whereas in the first case, false positive results will come in rare batches
  with unusually large portions of positives.
  Such a pattern could be flagged as a ``graceful failure''
  of the testing strategy which points to an outbreak
  without identifying the infected individuals.
  For this purpose, one would like to have a clear phase transition
  between perfect recovery and graceful failure which,
  considering Figure~\ref{fig:Varzoom}, requires large multiplicity~$m$.
\end{remark}

\begin{remark}
  Let us conclude our remarks with some information theoretic considerations.
  For the sake of the argument, we consider the case $n=q^2$.
  One can think of two thresholds for~$\rho$
  above which non-adaptive group testing might break down.
  The first one is the \emph{combinatorial} or \emph{disjuncness threshold}
  \begin{equation*}
    \rhodisj
    =\frac{m-1}{n}
    =\frac{m-1}{q^2}\text.
  \end{equation*}
  If the prevalence~$\rho$ exceeds~$\rhodisj$,
  then in average more than half of all test runs
  will be confronted with a number of infected items for which
  the pooling matrix cannot guarantee perfect identification any more.
  However, we see in Figure~\ref{fig:overclock}
  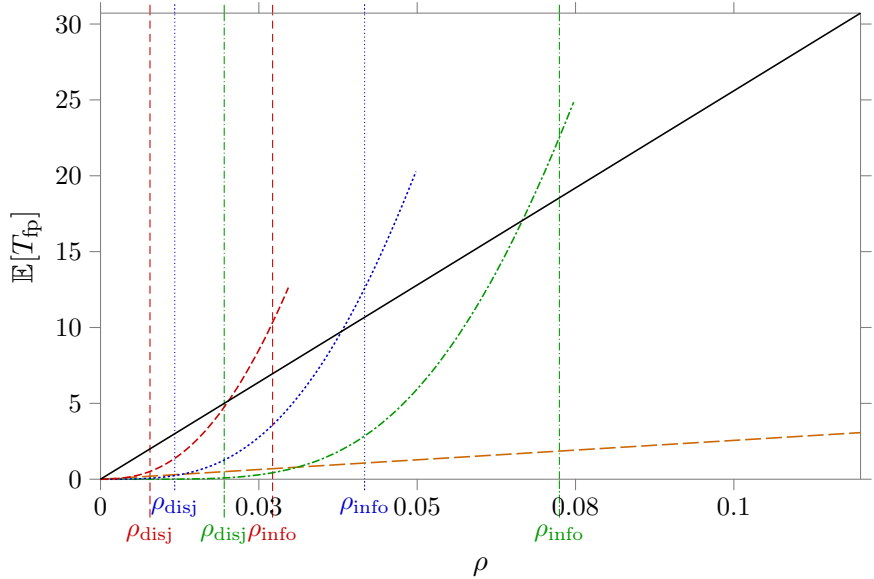
\begin{figure}[ht]
  \tikzsetnextfilename{overclock}
  \begin{tikzpicture}[scale=2]
    \datavisualization
    [ scientific axes
    , visualize as line=EBinnrho
    , visualize as smooth line/.list={3,4,6}
    , visualize as line=5percent
    , x axis=
      { include value={0}
      , label={$\rho$}
      , ticks and grid=
        { major=
          { also at=
            { (2/256) as [{style={red!80!black,thin,densely dashed}, tick text padding=1ex}]
                $\rhodisj$
            , (3/256) as [{style={blue!80!black,thin,densely dotted}}]
                $\rhodisj$
            , (5/256) as [{style={green!60!black,thin,densely dash dot}, tick text padding=1ex}]
                $\rhodisj$
            , (.027171) as [{style={red!80!black,thin,densely dashed}, tick text padding=1ex}]
                $\rhoinfo$
            , (.041692) as [{style={blue!80!black,thin,densely dotted}}]
                $\rhoinfo$
            , (.07245) as [{style={green!60!black,thin,densely dash dot}, tick text padding=1ex}]
                $\rhoinfo$
            }
          }
        }
      }
    ,y axis={ include value={0}
             , label={$\E[\Tfp]$}
             }
    , style sheet=strong colors
    , style sheet=vary dashing
    , data/format=function
    ]
    data [set=EBinnrho] {
      var x : {0,.12};
      func y = 256 * \value x;
    }
    data [set=5percent] {
      var x : {0,.12};
      func y = 25.6 * \value x;
    }
    data [set=3] {
      var x : interval [0:.03];
      func y = ETfp(\value x, \value{set}, 0, 16, .02, .02, 256);
    }
    data [set=4] {
      var x : interval [0:.05];
      func y = ETfp(\value x, \value{set}, 0, 16, .02, .02, 256);
    }
    data [set=6] {
      var x : interval [0:.075];
      func y = ETfp(\value x, \value{set}, 0, 16, .02, .02, 256);
    }
    ;
  \end{tikzpicture}
  \caption{The expected number of false positives and the thresholds
    $\rho_{\mathrm{disj}} = \frac{m-1}{q^2}$
    and $\rho_{\mathrm{ent}}=h^{-1}(\frac{m}{q})$
    for $n=256$ items in pools of size $q=16$
    and noisy testing ($\pfp=\pfn=0.02$).}
  \label{fig:overclock}
\end{figure}%
  that identification still works rather well for moderately higher prevalences,
  \ie\ the expected number of false positives remains small
  and the good news is that it is perfectly possible
  to overclock COMP beyond~$\rhodisj$.
  The second threshold is the \emph{information theoretic threshold}
  \begin{equation*}
    \rhoinfo
    =h^{-1}\left(\frac mq\right)\text,
  \end{equation*}
  where $h(x)=x\log_2(x)+(1-x)\log_2(1-x)$
  denotes the entropy function,
  and we take its inverse on the interval $\Ioo0{1/2}$ where it is monotone.
  For prevalences~$\rho$ above~$\rhoinfo$,
  the average information contained in a binary string of length~$n$
  with a portion~$\rho$ of ones will exceed the possible information
  that can be encoded in a binary string of length $m\cdot q$,
  \ie\ in the number of pools,
  thus imposing a hard limit on the maximal prevalence~$\rho$
  for all possible compression ratios.
  We see that the number of false positives
  starts to blow up near~$\rhoinfo$.
  However, we can observe in the logarithmic plot
  in Figure~\ref{fig:overclocklog}
\begin{figure}[ht]
  \tikzsetnextfilename{overclocklog}
  \begin{tikzpicture}[scale=2]
    \datavisualization
    [ scientific axes
    , visualize as smooth line/.list={EBinnrho,3,4,6,5percent}
    , x axis=
      { label={$\rho$}
      , logarithmic
      , ticks={ major={ also at={ .005 as {$0.005$}
                                }
                      }
              , minor steps between steps
              }
      , ticks and grid=
        { major=
          { also at=
            { (2/256) as [{style={red!80!black,thin,densely dashed}, tick text padding=1ex}]
                $\rhodisj$
            , (3/256) as [{style={blue!80!black,thin,densely dotted}, tick text padding=1ex}]
                $\rhodisj$
            , (5/256) as [{style={green!60!black,thin,densely dash dot}, tick text padding=1ex}]
                $\rhodisj$
            , (.027171) as [{style={red!80!black,thin,densely dashed}}]
                $\rhoinfo$
            , (.041692) as [{style={blue!80!black,thin,densely dotted}, tick text padding=1ex}]
                $\rhoinfo$
            , (.07245) as [{style={green!60!black,thin,densely dash dot}, tick text padding=1ex}]
                $\rhoinfo$
            }
          }
        }
      }
    , y axis={ label={$\E[\Tfp]$}
             , logarithmic
             , ticks={ minor steps between steps }
             }
    , data/format=function
    , style sheet=strong colors
    , style sheet=vary dashing
    , legend={ below 
             , max rows=2
             }
    , EBinnrho={label in legend={text={Expected number of positive items}}}
    , 5percent={label in legend={text={$\text{Expected number of positive items}/10$}}}
    , 3={label in legend={text={$m=3$}}}
    , 4={label in legend={text={$m=4$}}}
    , 6={label in legend={text={$m=6$}}}
    ]
    data [set=EBinnrho] {
      var x : interval [.005:1];
      func y = 256 * \value x;
    }
    data [set=5percent] {
      var x : interval [.005:1];
      func y = 25.6 * \value x;
    }
    data [set=3] {
      var x : interval [.005:.99];
      func y = ETfp(\value x, \value{set}, 0, 16, .02, .02, 256);
    }
    data [set=4] {
      var x : interval [.01:.99];
      func y = ETfp(\value x, \value{set}, 0, 16, .02, .02, 256);
    }
    data [set=6] {
      var x : interval [.02:.99];
      func y = ETfp(\value x, \value{set}, 0, 16, .02, .02, 256);
    }
    ;
  \end{tikzpicture}
  \caption{Doubly logarithmic plot of the expected number of false positives
    and the thresholds $\rho_{\mathrm{disj}} = \frac{m-1}{q^2}$
    and $\rho_{\mathrm{ent}} = h^{-1}(\frac{m}{q})$ for $n=256$
    items in pools of size $q=16$ and noisy testing ($\pfp=\pfn=0.02$).
    At the information theoretic maximal prevalence $\rho_{\mathrm{ent}}$,
    the expected number of false negatives is well less than
    one order of magnitude below the expected true number of positives.}
  \label{fig:overclocklog}
\end{figure}
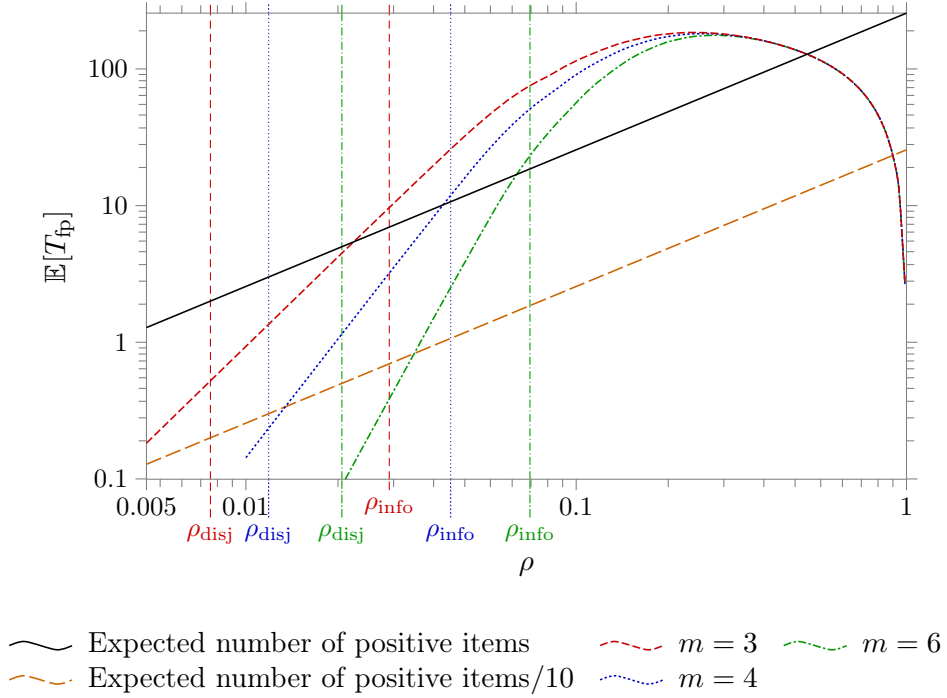%
  that even at the information theoretic maximal prevalence~$\rhoinfo$,
  the false positives are far less than one magnitude above the true positives.
  This means that while we would not recommend running
  non-adaptive multipool testing at prevalences
  near the information theoretic maximal prevalence~$\rhoinfo$,
  \emph{even in this extreme regime,
    the algorithm produces results which are useful
    as a first stage in a screening strategy}.
\end{remark}


\section{Sensitivity, specificity, and error probabilities for COMP and NCOMP}
  \label{sec:calculations}

In this section we prove Theorems~\ref{thm:sens_and_spec}
and~\ref{cor:type_I_and_II}.
We start with the following lemma.

\begin{lemma}\label{lemma:P(Y=0|X)}
  Fix $k\in\{0,\dotsc,q\}$, a pool~$\Pi_i$, and $j_1,\dotsc,j_k\in\Pi_i$.
  Then
  \begin{align*}
    \Prob(Y_i=0\given X_{j_1},\dotsc,X_{j_k})
    =\gamma_k\pfn^{X_{j_1}+\dotsb+X_{j_k}}
  \end{align*}
  and
  \begin{align*}
    \Prob(Y_i=1\given X_{j_1},\dotsc,X_{j_k})
    =
    1 - \gamma_k\pfn^{X_{j_1}+\dotsb+X_{j_k}}
  \end{align*}
  with $\gamma_k:=\Prob(Y_i=0\given X_{j_1}=\dotsb=X_{j_k}=0)
                 =(1-\pfp)(1-(1-\pfn)\rho)^{q-k}$.
\end{lemma}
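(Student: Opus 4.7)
The plan is to use the tower property of conditional expectation to reduce the claim to the given noise model~\eqref{eq:error_model}. Let $\Pi_i = \{j_1, \dotsc, j_k, j_{k+1}, \dotsc, j_q\}$ be an enumeration of the $q$ items in pool~$i$ (which has cardinality~$q$ by condition~(M1) in Definition~\ref{def:multipool}). The key observation is that the noise model gives $\Prob(Y_i=0\mid X_{j_1},\dotsc,X_{j_q}) = (1-\pfp)\pfn^{X_{j_1}+\dotsb+X_{j_q}}$, since $(AX)_i = \sum_{l=1}^q X_{j_l}$.

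First I would apply the law of total expectation, conditioning also on $X_{j_{k+1}}, \dotsc, X_{j_q}$:
\begin{equation*}
  \Prob(Y_i=0\given X_{j_1},\dotsc,X_{j_k})
  = \E\bigl[(1-\pfp)\pfn^{X_{j_1}+\dotsb+X_{j_q}}\bigm|X_{j_1},\dotsc,X_{j_k}\bigr].
\end{equation*}
Since $X_{j_1}, \dotsc, X_{j_k}$ are measurable with respect to the conditioning $\sigma$-algebra, the factor $\pfn^{X_{j_1}+\dotsb+X_{j_k}}$ pulls out. The remaining $X_{j_{k+1}}, \dotsc, X_{j_q}$ are \iid{} Bernoulli($\rho$) and independent of the conditioning variables, so the conditional expectation becomes an unconditional one.

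Next I would compute $\E[\pfn^{X}]$ for a single $X \sim \text{Bernoulli}(\rho)$: this equals $(1-\rho)\cdot 1 + \rho\cdot \pfn = 1 - (1-\pfn)\rho$. By independence, $\E[\pfn^{X_{j_{k+1}}+\dotsb+X_{j_q}}] = (1-(1-\pfn)\rho)^{q-k}$. Combining yields
\begin{equation*}
  \Prob(Y_i=0\given X_{j_1},\dotsc,X_{j_k})
  = (1-\pfp)(1-(1-\pfn)\rho)^{q-k}\,\pfn^{X_{j_1}+\dotsb+X_{j_k}},
\end{equation*}
and the prefactor is exactly $\gamma_k$. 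Plugging in $X_{j_1}=\dotsb=X_{j_k}=0$ confirms the stated formula for $\gamma_k$. The second identity follows by taking the complement $\Prob(Y_i=1\mid\cdot) = 1 - \Prob(Y_i=0\mid\cdot)$.

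There is no real obstacle here; the computation is essentially bookkeeping with the tower property and independence of the $X_j$. The only subtle point worth emphasizing in the write-up is that the independence used between the conditioning variables $X_{j_1}, \dotsc, X_{j_k}$ and the integrated variables $X_{j_{k+1}}, \dotsc, X_{j_q}$ relies on the global independence of the Bernoulli infection statuses, not on any property of the pooling matrix beyond condition~(M1) that guarantees exactly $q$ items in the pool.
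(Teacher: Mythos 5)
Your proposal is correct and follows essentially the same route as the paper: condition on the full pool via the tower property, apply the noise model, and integrate out the $q-k$ remaining \iid\ Bernoulli states. The only cosmetic difference is that you factor $\E\bigl[\pfn^{X_{j_{k+1}}+\dotsb+X_{j_q}}\bigr]$ as a product of single-item expectations, whereas the paper sums over the binomially distributed count of positives among those $q-k$ items and invokes the binomial theorem --- the same identity written two ways.
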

\begin{proof}
  We condition on the number~$s$ of positive items in~$\Pi_i$
  besides the one we condition on, apply~\eqref{eq:error_model},
  and then simplify using Newton's theorem:
  \begin{align*}&
    \Prob(Y_i=0\given X_{j_1},\dotsc,X_{j_k})\\&
    =\E\Bigl[\Prob\Bigl(Y_i=0\Bigm|(AX)_i-\sum_{\ell=1}^kX_{j_\ell}\Bigr)
      \Bigm|X_{j_1},\dotsc,X_{j_k}\Bigr]\\&
    =\sum_{s=0}^{q-k}(1-\pfp)\pfn^{s+\sum_{\ell=1}^kX_{j_\ell}}
      \binom{q-k}s\rho^s(1-\rho)^{q-k-s}\\&
    =(1-\pfp)(\pfn\rho+1-\rho)^{q-k}\pfn^{\sum_{\ell=1}^kX_{j_\ell}}
    =\gamma_k\pfn^{\sum_{\ell=1}^kX_{j_\ell}}
    \text.
  \end{align*}
  Choosing $X_{j_1}=\dotsb=X_{j_k}=0$,
  the statement on~$\gamma_k$ follows.
\end{proof}

Lemma~\ref{lemma:P(Y=0|X)} allows us to calculate
the sensitivity and specificity for Theorem~\ref{thm:sens_and_spec}.

\begin{proof}[Proof of Theorem~\ref{thm:sens_and_spec}]
  For item~$i$ to test positive, at most~$\nc$ pools containing item~$i$
  can test positive.
  The number of pools with positive result for item~$j$ is $(A^\top Y)_j$.
  Therefore, we have by conditional independence and Lemma~\ref{lemma:P(Y=0|X)}
  \begin{align}
    \sens&
    =\Prob(Z_j=1\given X_j=1)
    =\Prob\bigl((A^\top Y)_j\ge m-\nc\given X_j=1\bigr)\notag\\&
    =\sum_{k=m-\nc}^m\Prob((A^\top Y)_j=k\given X_j=1)\notag\\&
    =\sum_{k=m-\nc}^m\binom mk\Prob(Y_i=1\given X_j=1)^k
      \Prob(Y_i=0\given X_j=1)^{m-k}\notag\\&
    =\sum_{k=m-\nc}^m\binom mk\bigl(1-\pfn \gamma_1\bigr)^k
      \bigl(\pfn \gamma_1\bigr)^{m-k}
    \label{eq:P(Z=1|X=1)}
  \end{align}
  which shows~\eqref{eq:sens}.
  Identity~\eqref{eq:spec} follows analogously.
\end{proof}

We apply Theorem~\ref{thm:sens_and_spec}
to calculate the probability of Type~I and Type~II errors
in Corollary~\ref{cor:type_I_and_II}.

\begin{proof}[Proof of Corollary~\ref{cor:type_I_and_II}]
  Using Bayes formula, we have
  \begin{multline}\label{eq:bayes}
    \Prob(X_j=0\given Z_j=1)
    =\frac{\Prob(X_j=0)\Prob(Z_j=1\given X_j=0)}{\Prob(Z_j=1)}\\
    =\frac{\Prob(X_j=0)\Prob(Z_j=1\given X_j=0)}
          {\Prob(X_j=0)\Prob(Z_j=1\given X_j=0)
          +\Prob(X_j=1)\Prob(Z_j=1\given X_j=1)}\\
    =\left(1+\frac{\Prob(X_j=1)\Prob(Z_j=1\given X_j=1)}
                  {\Prob(X_j=0)\Prob(Z_j=1\given X_j=0)}
     \right)^{-1}
    =\left(1+\frac{\rho}{1-\rho}\cdot\frac{\sens}{(1-\spec)}\right)^{-1}
    \text.
  \end{multline}
  This shows \eqref{eq:TypeI}.
  Identity~\eqref{eq:TypeII} is established analogously.
\end{proof}

Finally, we derive the expectations given in Corollary~\ref{cor:expectation}.
\begin{proof}[Proof of Corollary~\ref{cor:expectation}]
  We calculate
  \begin{align*}
    \E[T]&
    =\E\Bigl[\sum\nolimits_{j=1}^n\ifu{\{Z_j=1\}}\Bigr]
    =n\Prob(Z_1=1)\\&
    =n\sum\nolimits_{k=0}^1\Prob(X_1=k)\Prob(Z_1=1\given X_1=k)\\&
    =n\bigl(\rho\cdot\sens+(1-\rho)(1-\spec)\bigr)\text,
  \end{align*}
  \begin{align*}
    \E[\Tfp]&
    =\E\Bigl[\sum\nolimits_{j=1}^n\ifu{\{X_j=0,Z_j=1\}}\Bigr]
    =n\Prob(X_1=0,Z_1=1)\\&
    =n\Prob(X_1=0)\Prob(Z_1=1\given X_1=0)
    =n(1-\rho)(1-\spec)\text,
  \end{align*}
  and analogously
  \begin{equation*}
    \E[\Tfn]
    =n\Prob(X_1=1)\Prob(Z_1=0\given X_1=1)
    =n\rho(1-\sens)\text.\qedhere
  \end{equation*}
\end{proof}

\section{Bounding the variance}
  \label{sec:Efron-Stein}

In this section, we prove Theorem~\ref{thm:variance}.
We are in the situation $\nc=0$, \ie\ decoding by COMP,
and $\pfp=\pfn=0$, which means noiseless testing.
For an $(n,q,m)$-multipool matrix we want to estimate~$\Var[T]$ and~$\Var[\Tfp]$,
where~$T$ is the number of positive results
and~$\Tfp$ the number of false positive results.
Note that the number of false negatives~$\Tfn$ is zero in the noiseless case.

We are going to use the Efron--Stein inequality:
\begin{proposition}[Efron--Stein inequality]
  Let $X_1,\dotsc,X_n$ and $X'_1,\dotsc,X'_n$
  be independent variables, where~$X_i$
  has the same distribution as~$X'_i$ for all~$i$.
  Denote
  \begin{equation*}
    X
    :=(X_1,\dotsc,X_n)
    \qtextq{and}
    X^{(i)}
    :=(X_1,\dotsc,X_{i-1},X'_i,X_{i+1},\dotsc,X_n)\text.
  \end{equation*}
  Let~$f$ be a function of~$n$ variables.
  Then
  \begin{equation}
    \Var[f(X)]
    \le\frac12\sum_{i=1}^n\E\left[\abs[\big]{f(X)-f(X^{(i)})}^2\right]
    \text.
  \end{equation}
\end{proposition}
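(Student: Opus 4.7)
The plan is to prove the Efron--Stein inequality by combining the Doob martingale decomposition of $f(X)$ with the elementary identity $\Var[Z]=\tfrac12\E[(Z-Z')^2]$ that relates the variance of a random variable to its expected squared distance from an independent copy. First, I would set $\mathcal F_i:=\sigma(X_1,\dotsc,X_i)$ for $i\in\{0,\dotsc,n\}$ and $T_i:=\E[f(X)\given\mathcal F_i]$ (so that $T_0=\E[f(X)]$ and $T_n=f(X)$) and observe that the increments $\Delta_i:=T_i-T_{i-1}$ form a sequence of martingale differences. Their pairwise orthogonality in $L^2$ immediately yields
\begin{equation*}
  \Var[f(X)]
  =\E\Bigl[\Bigl(\sum_{i=1}^n\Delta_i\Bigr)^2\Bigr]
  =\sum_{i=1}^n\E[\Delta_i^2].
\end{equation*}

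The heart of the argument is then to show that $\E[\Delta_i^2]\le\tfrac12\E[(f(X)-f(X^{(i)}))^2]$ for each $i$. Here I would exploit that, conditionally on $\mathcal F_{i-1}$, the random variable $T_i$ depends only on $X_i$, so that $\E[\Delta_i^2\given\mathcal F_{i-1}]=\Var[T_i\given\mathcal F_{i-1}]$. Introducing the auxiliary quantity $T_i':=\E[f(X^{(i)})\given X_1,\dotsc,X_{i-1},X_i']$, obtained from $T_i$ by swapping $X_i$ for its independent copy $X_i'$, the conditional version of $\Var[Z]=\tfrac12\E[(Z-Z')^2]$ yields
\begin{equation*}
  \Var[T_i\given\mathcal F_{i-1}]
  =\tfrac12\E\bigl[(T_i-T_i')^2\bigm|\mathcal F_{i-1}\bigr].
\end{equation*}
Linearity of conditional expectation then identifies $T_i-T_i'$ with $\E[f(X)-f(X^{(i)})\given X_1,\dotsc,X_i,X_i']$, and the conditional Jensen inequality combined with the tower property completes the bound on $\E[\Delta_i^2]$ after taking total expectations.

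The main subtlety will be tracking the factor $\frac12$. A naive attempt that rewrites $\Delta_i=\E[f(X)-f(X^{(i)})\given\mathcal F_i]$ and applies Jensen directly gives only $\Var[f(X)]\le\sum_i\E[(f(X)-f(X^{(i)}))^2]$ and loses a factor of two. Routing the bound through the independent copy $X_i'$ and through the independent-copy representation of the conditional variance is precisely the mechanism that recovers the sharp constant asserted in the proposition.
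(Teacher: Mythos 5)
Your proposal is correct, but there is no proof in the paper to compare it against: the authors state the Efron--Stein inequality as a bare proposition, treating it as a classical tool, and proceed immediately to apply it in the proof of Theorem~\ref{thm:variance}. Your argument is a complete and accurate rendition of the standard martingale proof. The Doob decomposition $T_i=\E[f(X)\given\sigma(X_1,\dotsc,X_i)]$ with orthogonal increments gives $\Var[f(X)]=\sum_{i=1}^n\E[\Delta_i^2]$; the identity $\E[\Delta_i^2\given\mathcal F_{i-1}]=\Var[T_i\given\mathcal F_{i-1}]$ holds by the martingale property $\E[T_i\given\mathcal F_{i-1}]=T_{i-1}$; conditionally on $\mathcal F_{i-1}$ the variables $T_i$ and $T_i'$ are indeed i.\,i.\,d.\ (the same function evaluated at $X_i$ and at its independent copy $X_i'$), which justifies the conditional independent-copy identity $\Var[T_i\given\mathcal F_{i-1}]=\tfrac12\E[(T_i-T_i')^2\given\mathcal F_{i-1}]$; and the final step is sound because adjoining the independent variable $X_i'$ to the conditioning leaves $\E[f(X)\given\mathcal F_i]$ unchanged, so $T_i-T_i'=\E[f(X)-f(X^{(i)})\given X_1,\dotsc,X_i,X_i']$ and conditional Jensen plus the tower property yield $\E[\Delta_i^2]\le\tfrac12\E[(f(X)-f(X^{(i)}))^2]$. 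Your closing remark is also on point: the shortcut $\Delta_i=\E[f(X)-f(X^{(i)})\given\mathcal F_i]$ with a direct application of Jensen proves the inequality only with constant $1$ instead of $\tfrac12$, and routing through the conditional variance of $T_i$ is exactly what recovers the sharp constant. The one phrase I would tighten: the equality $\E[\Delta_i^2\given\mathcal F_{i-1}]=\Var[T_i\given\mathcal F_{i-1}]$ follows from the martingale property rather than from ``$T_i$ depends only on $X_i$''; the latter fact is what you need one step later, for the i.\,i.\,d.\ coupling of $T_i$ and $T_i'$.
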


We can now prove Theorem~\ref{thm:variance}.
\begin{proof}[Proof of Theorem~\ref{thm:variance}]
  We pick an item~$i$,
  create an independent copy~$X'_i$ of its state~$X_i$,
  and write $X^{(i)}=(X_1,\dotsc,X_{i-1},X'_i,X_{i+1},\dotsc,X_n)$
  for the vector~$X$ where~$X_i$ is replaced by~$X'_i$.
  Since the multipool condition is symmetric under exchanging items,
  the Efron--Stein inequality and
  the tower property of the conditional expectation imply
  \begin{align}
    \Var[T]&
    \le\frac n2\E\left[\abs{T(X)-T(X_{(i)})}^2\right]\notag\\&
    \le\frac n2\E\left[\E\left[\abs{T(X)-T(X^{(i)})}^2
      \given\cF_i^c\right]\right]\notag\\&
    =n\rho(1-\rho)\E\left[\abs[\Big]{\ifu{\{\text{$Z_i=0$if$X_i=0$}\}}
      +\sum_{j\sim i}\ifu{\{\text{$j$ pivotal for $Z_i$}\}}}^2\right]
    \text.\label{eq:Efron-Stein_used}
  \end{align}
  Here, $\cF_i^c$ is the $\sigma$-algebra generated by
  $\{X_1,\dotsc,X_{i-1},X_{i+1},\dotsc,X_n\}$.
  The notation $j\sim i$ means that~$X_i$ and~$X_j$ share a pool,
  and~$i$ being pivotal for~$Z_j$ means that flipping~$X_i$
  from~$0$ to~$1$ will flip~$Z_j$ from~$0$ to~$1$.
  \par
  The item~$i$ shares pools with exactly $m(q-1)$ many other items,
  whence the sum has $m(q-1)+1$ terms.
  Using $\abs{\sum_{k = 1}^la_k}^2\le l\sum_{k=1}^l\abs{a_k}^2$
  in the expectation in~\eqref{eq:Efron-Stein_used}, we estimate further
  \begin{align}
    \Var[T]&
    \le n\rho(1-\rho)(m(q-1)+1)\left[\Prob\left(Z_i=0\given X_i=0\right)
      +\sum_{j\sim i}\Prob\left(\text{$i$ pivotal for $Z_j$}\right)\right]
    \notag\\&
    \le nmq\rho(1-\rho)
      \left[\spec+m(q-1)\Prob\left(\text{$i$ pivotal for $Z_j$}\right)\right]
    \text.\label{eq:Efron-Stein_used_2}
  \end{align}
  For~$i$ to be pivotal for~$Z_j$, we need~$X_j$ to be zero,
  the $q-2$ other items in the unique pool which contains items~$i$
  and~$j$ must be negative, and all other $m-1$ pools that item~$j$
  belongs to must have at least one positive item.
  This leads to
  \begin{equation}\label{eq:pivotal}
    \Prob\left(\text{$i$ pivotal for $Z_j$}\right)
    =(1-\rho)^{q-1}\left(1-(1-\rho)^{q-1}\right)^{m-1}\text.
  \end{equation}
  Inequality~\eqref{eq:Variance_T} now follows from~\eqref{eq:spec},
  \eqref{eq:Efron-Stein_used_2}, and \eqref{eq:pivotal}.
  \par
  The estimate~\eqref{eq:Variance_Tfp} on $\Var[\Tfp]$
  relies on analogous estimates using
  \begin{align*}&
    \Var[\Tfp]\\&
    \le n\rho(1-\rho)(m(q-1)+1)
      \left[\Prob\left(Z_i=1\given X_i=0\right)
      +\sum_{j\sim i}\Prob\left(\text{$i$ pivotal for $Z_j$}\right)\right]
      \notag\\&
    \le nmq\rho(1-\rho)\left[(1-\spec)
      +m(q-1)\Prob\left(\text{$i$ pivotal for $Z_j$}\right)\right]
    \text.\qedhere
  \end{align*}
\end{proof}

\section{Construction of maximally disjunct pooling matrices}\label{sec:matrices}

In this section, we prove Theorem~\ref{thm:matrices}
on the existence of particular multipools as in Definition~\ref{def:multipool},
that are maximally disjunct design matrices with constant row and column sums.
We are in the situation where~$q$ is a prime or a power of a prime and $n=q^2$
items are arranged in pools of size~$q$.
The underlying idea is essentially due to~\cite{ReedS-60}.
Multipools with a prime number of items per pool have been described,
\eg\ in \cite{Thierry-Mieg2006,Taeufer-20}
and with a prime power as number of items per pool in \cite{ErlichEtAl2015}.

\begin{proof}[Proof of Theorem~\ref{thm:matrices}]
  Let us first prove that the maximal number of pools an item participates in
  in a multipool arrangement with $n=q^2$ is $m=q+1$.
  By definition, two pools in a multipool can only share at most one item.
  Thus, any subset of two items in a pool uniquely determines this pool.
  Since there are at most~$\binom n2$ (unordered) pairs of items, we have at most
  \begin{equation*}
    \binom n2\binom q2^{-1}
    =\frac{n(n-1)}{q(q-1)}
  \end{equation*}
  many pools in a multipool.
  For the case $n=q^2$, this yields $q(q+1)=n+q$ many pools,
  whence every item can participate in at most $q+1$ many pools.
  This proves the bound on the maximal multiplicity~$m$
  in Theorem~\ref{thm:matrices}.

  For a construction of such a multipool with maximal multiplicity,
  let~$\F_q$ denote the finite field of size~$q$
  which exists because~$q$ is a prime power.
  We index the $n=q^2$
  items by the elements of the finite vector space~$\F_q^2$
  and use the ``straight lines'' in~$\F_q^2$ as pools:
  For all $a,b,c\in\F_q$, let
  \begin{equation*}
    \Pi_{a,b}:=\{(x,ax+b)\mid x\in\F_q\}
    \qtextq{and}
    \Pi_{\infty,c}:=\{(c,y)\mid y\in\F_q\}
    \text.
  \end{equation*}

  In any case, each line contains exactly~$q$ elements,
  each element of~$\F_q^2$ is contained in exactly $q+1$ lines,
  and different straight lines intersect in at most point of~$\F_q^2$.
  Counting the parameters $a,b,c$,
  one sees that there are $q^2+q$ straight lines in total.
  This yields a $(n,q,m)$-multipool of maximal multiplicity $m=q+1$.
  For multipools of lower multiplicity $m$,
  note that each $a\in\F_q$ determines a partition
  $\{\Pi_{a,b}\mid b\in\F_q\}$ of~$\F_q^2$.
  Therefore, for non-empty $M\subseteq\F_q$, the collection
  \begin{equation*}
    \Pi_M=\{\Pi_{a,b}\mid a\in M,b\in\F_q\}
  \end{equation*}
  is a multipool in which each item is contained
  in exactly~$m=\setsize M\in\{1,\dotsc,q\}$ pools.
\end{proof}

\begin{remark}
  If~$q$ is a proper prime, the arithmetic in~$\F_q$
  is simply the arithmetic modulo~$q$
  and these lines are indeed periodically continued straight lines
  with different slopes, cf.\ Figure~\ref{fig:pools-49-7-4}.
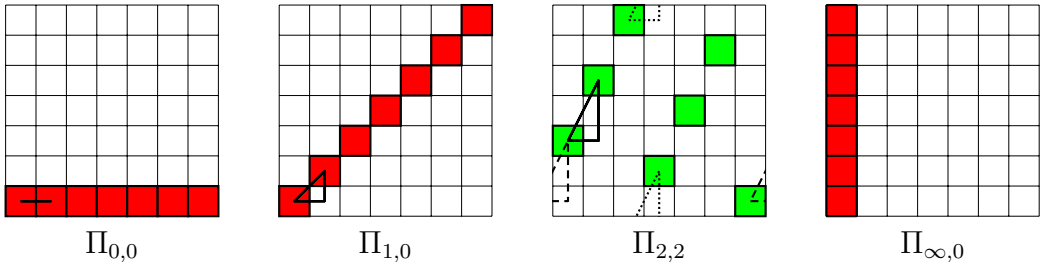
\begin{figure}[ht]
  \tikzsetnextfilename{pools-49-7-4}
  \begin{tikzpicture}[scale=.4]
    \foreach \m/\b [count=\n from 0] in {0/0, 1/0, 2/2}{
      \pgfmathtruncatemacro{\deltax}{\n*9}
      \begin{scope}[xshift=\deltax cm]
        \draw[very thin] (0,0) grid (7,7);
        \foreach \x in {0,...,6}{
          \pgfmathsetmacro\bcolor{array(\colorlist,\b)}
          \pgfmathtruncatemacro{\y}{Mod(\m*\x+\b,7)}
          \draw[thick,fill=\bcolor] (\x,\y) rectangle (\x+1,\y+1);
          \draw[thick,line join=round] 
            ([shift={(.5,.5)}] 0,\b) -| ([shift={(.5,.5)}] 1,\m+\b) -- cycle;
        }
        \draw (3.5,-1) node {$\Pi_{\m,\b}$};
      \end{scope}
    }
    \begin{scope}[xshift=18cm] 
      \clip (0,0) rectangle (7,7); 
      \begin{scope}[shift={(.5,.5)}]
        \draw[thick,line join=round,densely dotted] ( 2, 6) -| (3,8) -- cycle;
        \draw[thick,line join=round,densely dotted] ( 2,-1) -| (3,1) -- cycle;
        \draw[thick,line join=round,densely dashed] ( 6, 0) -| (7,2) -- cycle;
        \draw[thick,line join=round,densely dashed] (-1, 0) -| (0,2) -- cycle;
      \end{scope}
    \end{scope}
    \pgfmathsetmacro\bcolor{array(\colorlist,0)}
    \begin{scope}[xshift=27cm]
      \draw [very thin] (0,0) grid (7,7);
      \draw[fill=\bcolor] (0,0) rectangle (1,7);
      \draw[thick] (0,0) grid (1,7);
      \draw (3.5,-1) node {$\Pi_{\infty,0}$};
    \end{scope}
  \end{tikzpicture}
  \caption{The pools~$\Pi_{a,b}$ indeed are on
    periodically continued straight lines in the case $q=7$,
    since~$7$ is a proper prime.
    The gradient triangles illustrate the slope and the periodicity.}
  \label{fig:pools-49-7-4}
\end{figure}%
  The complete $(49,7,8)$-multipool is depicted
  in Figure~\ref{fig:multipool-49-7-8}.
\begin{figure}[ht]
  \pgfmathtruncatemacro\p{7}
  \pgfmathtruncatemacro\dz{1}
  \pgfmathtruncatemacro\q{\p^\dz}
  \pgfmathtruncatemacro\n{\q^2}
  \pgfmathtruncatemacro\M{\q}
  \pgfmathtruncatemacro\withainfty{1}
  \pgfmathtruncatemacro\m{\M+\withainfty}
  \pgfmathtruncatemacro\t{\q*\m}
  \pgfmathsetmacro\s{3.5/(\q+1)}
  \tikzsetnextfilename{multipool-\n-\q-\m}
  \begin{tikzpicture}[scale=\s]
    \foreach \a [count=\nr from 0] in {0,...,\M-1}{
      \pgfmathsetmacro{\deltax}{Mod(\nr,4)*(\q+1)}
      \pgfmathsetmacro{\deltay}{div(\nr,4)*(\q+2.25)}
      \begin{scope}[xshift=\deltax cm, yshift=-\deltay cm]
        \foreach \b in {0,...,\q-1}{
          \pgfmathsetmacro\bcolor{array(\colorlist,\b)}
          \foreach \x in {0,...,\q-1}{
            \pgfmathtruncatemacro{\y}{polysum(Galoisprod(\a,\x,\p,\dz),\b,\p)}
            \fill[\bcolor] (\x,\y) rectangle (\x+1,\y+1);
          }
          \draw[very thin] (0,0) grid (\q,\q);
        }
      \draw (\q/2,-1) node {$a=\a$};
      \end{scope}
    }
    \ifthenelse{\equal\withainfty1}{
      \pgfmathsetmacro{\deltax}{Mod(\nr+1,4)*(\q+1)}
      \pgfmathsetmacro{\deltay}{div(\nr+1,4)*(\q+2.25)}
      \begin{scope}[xshift=\deltax cm, yshift=-\deltay cm]
        \foreach \c in {0,...,\q-1}{
          \pgfmathsetmacro\ccolor{array(\colorlist,\c)}
          \fill[\ccolor] (\c,0) rectangle (\c+1,\q);
        }
        \draw (\q/2,-1) node {$a=\infty$};
        \draw [very thin] (0,0) grid (\q,\q);
      \end{scope}
    }{}
    \pgfmathsetmacro{\deltay}{div(\nr+\withainfty,4)*(\q+2.25)+3.5}
    \begin{scope}[xshift=3cm,yshift=-\deltay cm]
      \foreach \b in {0,...,\q-1}{
        \pgfmathsetmacro\bcolor{array(\colorlist,\b)}
        \draw[fill=\bcolor] (4*\b,0) rectangle (4*\b+1,1);
        \node at (4*\b+.5,-1) {$b=\b$};
      }
    \end{scope}
  \end{tikzpicture}
  \caption{The $(\n,\q,\m)$-multipool.}
  \label{fig:multipool-\n-\q-\m}
\end{figure}%
  For prime powers, the more complicated arithmetic in~$\F_q$
  leads to a less tangible structure of these lines,
  \cf\ Figure~\ref{fig:pools-64-8-4}
  for an illustration in the case $q=8$
\begin{figure}[ht]
  \tikzsetnextfilename{pools-64-8-4}
  \begin{tikzpicture}[scale=.4]
    \foreach \m/\b [count=\n from 0] in {1/1, 2/7, 3/3, 5/2}{
      \pgfmathtruncatemacro{\deltax}{\n*9}
      \begin{scope}[xshift=\deltax cm]
        \draw[very thin] (0,0) grid (8,8);
        \foreach \x in {0,...,7}{
          \pgfmathsetmacro\bcolor{array(\colorlist,\b)}
          \pgfmathtruncatemacro{\y}{polysum(Galoisprod(\m,\x,2,3),\b,2)}
          \draw[thick,fill=\bcolor] (\x,\y) rectangle (\x+1,\y+1);
          \coordinate (\n\x) at (\x+.5,\y+.5);
        }
        \draw (4,-1) node {$\Pi_{\m,\b}$};
      \end{scope}
    }
    \draw[<->,thick,dotted] (02) -- (12);
    \draw[<->,thick,dotted] (12) -- (32);
    \draw[<->,thick,dotted] (01) -- (21);
    \draw[<->,thick,dotted] (14) -- (24);
    \draw[<->,thick,dotted] (23) -- (33);
  \end{tikzpicture}
  \caption{Multipools are no more on straight lines in the case $q=8$.
    The dotted arrows indicate the unique intersections between the pools.}
  \label{fig:pools-64-8-4}
\end{figure}
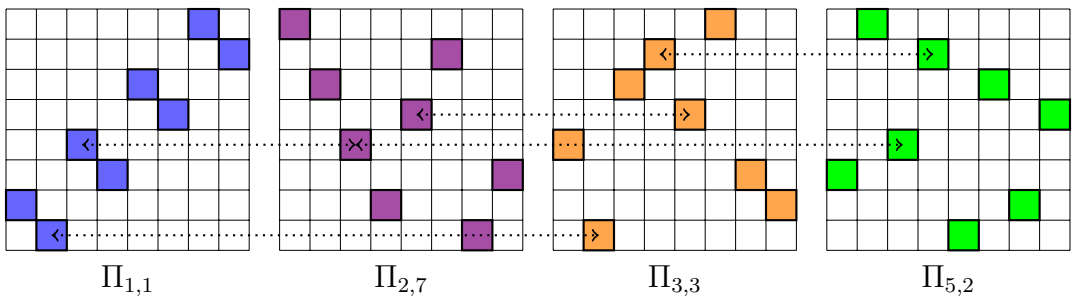%
  and Figure~\ref{fig:multipool-64-8-8}
  for an illustration of the $(64,8,8)$-multipool.
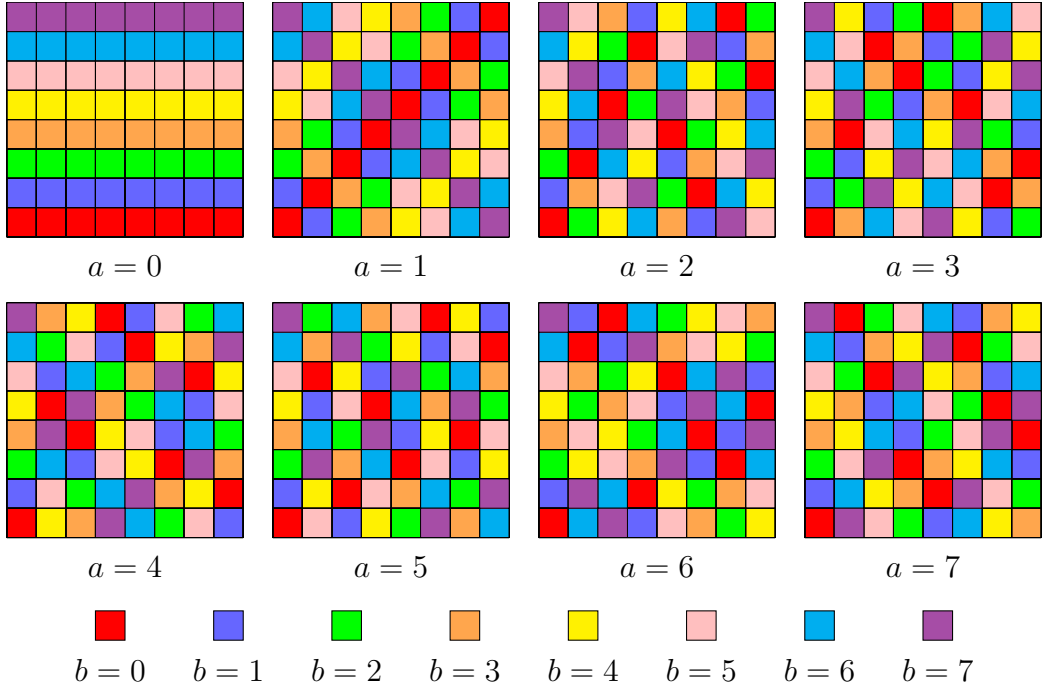
\begin{figure}[ht]
  \pgfmathtruncatemacro\p{2}
  \pgfmathtruncatemacro\dz{3}
  \pgfmathtruncatemacro\q{\p^\dz}
  \pgfmathtruncatemacro\n{\q^2}
  \pgfmathtruncatemacro\M{\q}
  \pgfmathtruncatemacro\withainfty{0}
  \pgfmathtruncatemacro\m{\M+\withainfty}
  \pgfmathtruncatemacro\t{\q*\m}
  \pgfmathsetmacro\s{3.5/(\q+1)}
  \tikzsetnextfilename{multipool-\n-\q-\m}
  \begin{tikzpicture}[scale=\s]
    \foreach \a [count=\nr from 0] in {0,...,\M-1}{
      \pgfmathsetmacro{\deltax}{Mod(\nr,4)*(\q+1)}
      \pgfmathsetmacro{\deltay}{div(\nr,4)*(\q+2.25)}
      \begin{scope}[xshift=\deltax cm, yshift=-\deltay cm]
        \foreach \b in {0,...,\q-1}{
          \pgfmathsetmacro\bcolor{array(\colorlist,\b)}
          \foreach \x in {0,...,\q-1}{
            \pgfmathtruncatemacro{\y}{polysum(Galoisprod(\a,\x,\p,\dz),\b,\p)}
            \fill[\bcolor] (\x,\y) rectangle (\x+1,\y+1);
          }
          \draw[very thin] (0,0) grid (\q,\q);
        }
      \draw (\q/2,-1) node {$a=\a$};
      \end{scope}
    }
    \ifthenelse{\equal\withainfty1}{
      \pgfmathsetmacro{\deltax}{Mod(\nr+1,4)*(\q+1)}
      \pgfmathsetmacro{\deltay}{div(\nr+1,4)*(\q+2.25)}
      \begin{scope}[xshift=\deltax cm, yshift=-\deltay cm]
        \foreach \c in {0,...,\q-1}{
          \pgfmathsetmacro\ccolor{array(\colorlist,\c)}
          \fill[\ccolor] (\c,0) rectangle (\c+1,\q);
        }
        \draw (\q/2,-1) node {$a=\infty$};
        \draw [very thin] (0,0) grid (\q,\q);
      \end{scope}
    }{}
    \pgfmathsetmacro{\deltay}{div(\nr+\withainfty,4)*(\q+2.25)+3.5}
    \begin{scope}[xshift=3cm,yshift=-\deltay cm]
      \foreach \b in {0,...,\q-1}{
        \pgfmathsetmacro\bcolor{array(\colorlist,\b)}
        \draw[fill=\bcolor] (4*\b,0) rectangle (4*\b+1,1);
        \node at (4*\b+.5,-1) {$b=\b$};
      }
    \end{scope}
  \end{tikzpicture}
  \caption{The $(\n,\q,\m)$-multipool.
    The vertical pools corresponding to $a=\infty$
    are missing to the maximal $(64,8,9)$-multipool.}
  \label{fig:multipool-\n-\q-\m}
\end{figure}%
%


Note that the design matrix
$A\in\{0,1\}^{(M\times\F_q)\times\F_q^2}=\{0,1\}^{t\times n}$
for the multipool~$\Pi_M$ is given by
\begin{align*}
  A_{(a,b),(x,y)}
  =\ifu{\Pi_{a,b}}((x,y))&
  =\begin{cases}
    1&(x,y)\in\Pi_{a,b}\\
    0&\text{otherwise}
  \end{cases}
\end{align*}
where $a,b\in\F_q$ determine the pool and $x,y\in\F_q$ index the item.
See Figure~\ref{fig:design7} for the design matrix of the $(49,7,8)$-multipool
\begin{figure}[ht]
  \pgfmathtruncatemacro\q{7}
  \pgfmathtruncatemacro\n{\q^2}
  \pgfmathtruncatemacro\M{\q}
  \pgfmathtruncatemacro\withainfty{1}
  \pgfmathtruncatemacro\m{\M+\withainfty}
  \pgfmathtruncatemacro\t{\q*\m}
  \pgfmathsetmacro\s{13/\n}
  \tikzsetnextfilename{design\q}
  \begin{tikzpicture}[scale=\s]
    \foreach \x in {0,...,\q-1}{
      \node[yshift=2ex] at ({\q*(\x+.5)},\t) {$x=\x$};
    }
    \foreach \a in {0,...,\M-1}{
      \node[xshift=-2ex,rotate=90] at (0,{\t-\q*(\a+.5)}) {$a=\a$};
      \foreach \b in {0,...,\q-1}{
        \foreach \x in {0,...,\q-1}{
          \pgfmathtruncatemacro\y{Mod(\a*\x+\b,\q)}
          \pgfmathtruncatemacro\X{\q*\x+\y}
          \pgfmathtruncatemacro\Y{\t-1-(\q*\a+\b)}
          \pgfmathsetmacro\bcolor{array(\colorlist,\b)}
          \draw[help lines, fill=\bcolor] (\X,\Y) rectangle (\X+1,\Y+1);
        }
      }
    }
    \ifthenelse{\equal\withainfty1}{
      \node[xshift=-2ex,rotate=90] at (0,{\t-\q*(\M+.5)}) {$a=\infty$};
      \foreach \c in {0,...,\q-1}{
        \pgfmathsetmacro\bcolor{array(\colorlist,\c)}
        \foreach \y in {0,...,\q-1}{
          \pgfmathtruncatemacro\X{\q*\c+\y}
          \pgfmathtruncatemacro\Y{\t-1-(\q*\M+\c)}
          \draw[help lines, fill=\bcolor] (\X,\Y) rectangle (\X+1,\Y+1);
        }
      }
    }{}
    \draw[help lines] (0,0) grid[step=\q] (\n,{\q*(\M+\withainfty)});
  \end{tikzpicture}
  \caption{Design matrix of the $(\n,\q,\m)$-multipool.}
  \label{fig:design\q}
\end{figure}%
and Figure~\ref{fig:design8} for the design matrix of the $(64,8,9)$-multipool.
\begin{figure}[ht]
  \pgfmathtruncatemacro\p{2}
  \pgfmathtruncatemacro\dz{3}
  \pgfmathtruncatemacro\q{\p^\dz}
  \pgfmathtruncatemacro\n{\q^2}
  \pgfmathtruncatemacro\M{\q}
  \pgfmathtruncatemacro\withainfty{1}
  \pgfmathtruncatemacro\m{\M+\withainfty}
  \pgfmathtruncatemacro\t{\q*\m}
  \pgfmathsetmacro\s{13/\n}
  \tikzsetnextfilename{design\q}
  \begin{tikzpicture}[scale=\s]
    \foreach \x in {0,...,\q-1}{
      \node[yshift=2ex] at ({\q*(\x+.5)},\t) {$x=\x$};
    }
    \foreach \a in {0,...,\M-1}{
      \node[xshift=-2ex,rotate=90] at (0,{\t-\q*(\a+.5)}) {$a=\a$};
      \foreach \b in {0,...,\q-1}{
        \foreach \x in {0,...,\q-1}{
          \pgfmathtruncatemacro\y{polysum(Galoisprod(\a,\x,\p,\dz),\b,\p)}
          \pgfmathtruncatemacro\X{\q*\x+\y}
          \pgfmathtruncatemacro\Y{\t-1-(\q*\a+\b)}
          \pgfmathsetmacro\bcolor{array(\colorlist,\b)}
          \draw[help lines, fill=\bcolor] (\X,\Y) rectangle (\X+1,\Y+1);
        }
      }
    }
    \ifthenelse{\equal\withainfty1}{
      \node[xshift=-2ex,rotate=90] at (0,{\t-\q*(\M+.5)}) {$a=\infty$};
      \foreach \c in {0,...,\q-1}{
        \pgfmathsetmacro\bcolor{array(\colorlist,\c)}
        \foreach \y in {0,...,\q-1}{
          \pgfmathtruncatemacro\X{\q*\c+\y}
          \pgfmathtruncatemacro\Y{\t-1-(\q*\M+\c)}
          \draw[help lines, fill=\bcolor] (\X,\Y) rectangle (\X+1,\Y+1);
        }
      }
    }{}
    \draw[help lines] (0,0) grid[step=\q] (\n,{\q*(\M+\withainfty)});
  \end{tikzpicture}
  \caption{Design matrix of the $(\n,\q,\m)$-multipool.}
  \label{fig:design\q}
\end{figure}%
\end{remark}

\begin{remark}\label{rem:algebraic}
  An equivalent, more algebraic construction of pooled tests
  based on Reed--Solomon codes is described in \cite{ErlichEtAl2015}.
  There, the items are indexed by polynomials with coefficients in~$\F_q$,
  more specifically,
  the~$n$ smallest polynomials according to the lexicographical order.
  To understand the base ordering in~$\F_q$ giving rise to this order,
  we need more details of the standard construction of~$\F_q$.

  Consider the factoring of the prime power $q=p^a$.
  For this, we fix an irreducible polynomial~$P_{p,a}$ of degree~$a$ over~$\F_p$,
  for example the Conway polynomial that fits the bill,
  see e.\,g.\ \cite{Luebeck2008}.
  Then, $\F_q:=\F_p[x]/P_{p,a}$ is the space of all polynomials over~$\F_p$
  modulo~$P_{p,a}$.
  Consider the representative of an element $P\in\F_q$
  with degree lower than~$a$.
  The coefficients can be used to form a number in base~$p$
  which we can computed by considering~$P$
  as a polynomial over~$\Z$ and evaluating it at~$p$.
  This defines a bijection
  \begin{equation}
    \F_q\to\{0,1,\dotsc,q-1\}\text,
  \end{equation}
  which gives the ordering of~$\F_q$.

  The pools are grouped into~$m$ layers, each layer consisting of~$q$ pools.
  For convenience, we index the pools again by $M'\times\F_q$,
  where~$M'\subseteq\F_q$ has size~$m$.
  The design matrix~$A'\in\{0,1\}^{(M'\times\F_q)\times\F_q^2}$
  of the Reed--Solomon code testing strategy is given by
  \begin{equation*}
    A'_{(a,b),f}
    =\begin{cases}
      1&f(a)=b\\
      0&\text{otherwise.}
    \end{cases}
  \end{equation*}
  If~$n$ is a multiple of~$q$ and lower or equal to~$q^2$,
  this construction gives a multipool, too.
  Indeed, each item is in exactly one pool for each layer,
  so it is contained in exactly~$m$ pools.
  Furthermore, by the lexicographical order,
  the absolute terms of the polynomials cycle through~$\F_q$
  exactly~$n/q$ times, so that each pool contains exactly $n/q$ items.
  And finally, because $n\le q^2$, all polynomials used to index the items
  are constant or linear, whence two items can share at most one pool.
  Indeed, if two polynomials of degree lower or equal to~$1$
  agree in two or more places, they are equal and describe the same pool.

  This gives us multipools with multiplicities up to $m = q$.
  The remaining layer in a multipool of maximal multiplicity $m=q+1$
  consists of lines of slope infinity and cannot be represented
  as a graph of polynomials, so they need to be added manually.

  The geometric construction and the algebraic construction from
  Remark~\ref{rem:algebraic} produce the same multipools.
  Indeed, the geometric condition $y=ax+b$
  corresponds to the algebraic condition $y-ax=b$, thus choosing $M'=-M$,
  the design matrices agree up to the order of rows.
\end{remark}

\printbibliography

\end{document}